\newcommand{\blue}[1]{\textcolor{blue}{#1}}
\newcommand{\red}[1]{\textcolor{red}{#1}}
\newcommand{\orange}[1]{\textcolor{orange}{#1}}
\newcommand{\magenta}[1]{\textcolor{magenta}{#1}}
\newcommand{\Supervisor}{\blue{\textit{Supervisor}}}
\newcommand{\Planner}{\red{\textit{Planner}}}
\newcommand{\Tracker}{\orange{\textit{Tracker}}}
\newcommand{\Customer}{\magenta{\textit{CustomerInterface}}}
\newtheorem{contract}{Contract}
\newtheorem{property}{Property}
\newtheorem{definition}{Definition}
\newtheorem{proposition}{Proposition}
\DeclarePairedDelimiter\abs{\lvert}{\rvert}%
\DeclarePairedDelimiter\norm{\lVert}{\rVert}%
\title{\LARGE \bf
Failure-Tolerant Contract-Based Design of an Automated Valet Parking System using a Directive-Response Architecture
}
\author{Josefine B. Graebener, Tung Phan-Minh, Jiaqi Yan, Qiming Zhao and Richard M. Murray
}
\begin{document}

\maketitle
\thispagestyle{empty}
\pagestyle{empty}

\begin{abstract}
Increased complexity in cyber-physical systems calls for modular system design methodologies that guarantee correct and reliable behavior, both in normal operations and in the presence of failures. This paper aims to extend the contract-based design approach using a directive-response architecture to enable reactivity to failure scenarios. The architecture is demonstrated on a modular automated valet parking (AVP) system. The contracts for the different components in the AVP system are explicitly defined, implemented, and validated against a Python implementation.
\end{abstract}

\section{INTRODUCTION}
Formally guaranteeing safe and reliable behavior for modern cyber-physical systems is becoming challenging as standard practices do not scale \cite{benveniste2018contracts}. Managing these highly complex architectures requires a design process that explicitly defines the dependencies and interconnections of system components to enable guaranteed safe behavior of the implemented system~\cite{censi2015mathematical}. A leading design methodology to develop component-based software is \textit{contract-based design}, which formalizes the design process in view of component hierarchy and composition \cite{filippidis2019decomposing,nuzzo2015platform,sangiovanni2012taming}. Contract-based design reduces the complexity of the design and verification process by decomposing the system tasks into smaller tasks for the components to satisfy. From the composition of these components, overall system properties can be inferred or proved. This contract-based architecture has been demonstrated for several applications \cite{damm2011using,damm2005boosting,nuzzo2013contract,maasoumy2015smart}. 
 Our goal here is to adapt and extend this framework to model a directive-response architecture on an automated valet parking system with the following features: 
\begin{enumerate}
    \item Discrete and continuous decision making components, which have to interact with one another.
    \item Different components have different temporal requirements.
    \item A natural hierarchy between the different 
    components in our system that may be thought of as different layers of abstraction.
    \item The system involves both human and non-human agents, the number of which is allowed to change over time.
    \item Industry interest in such a system.
\end{enumerate}
One example of industry efforts to commercialize such a system is the automated valet parking system developed by Bosch in collaboration with Mercedes-Benz, which has been demonstrated in the Mercedes-Benz Museum parking garage in Stuttgart, Germany. Bosch and Daimler also later announced in 2020 that they would set up a commercially operating AVP at the Stuttgart airport~\cite{BoschAVP}. Another commercial AVP system is supposed to be set up by Bosch in downtown Detroit as a collaboration with Bedrock and Ford~\cite{BoschAVPDetroit}.
Other examples include efforts by Siemens~\cite{siemens_avp} and DENSO~\cite{denso_avp}.
The contributions of this paper include the formulation of a formal contract structure for an automated valet parking system with multiple layers of abstraction with a directive-response architecture for failure-handling. By implementing this system in Python, we aim to bridge the large gap between abstract contract metatheory and such non-trivial engineering applications. In addition, we incorporate error handling into the contracts and demonstrate the use of this architecture and approach towards writing specifications in the context of the automated valet parking example. Finally, we prove that the composed implementation satisfies the composite contract, adding this example of a larger scale control system, involving a dynamic set of agents that are allowed to fail, to the small and slowly growing list of examples of formal assume-guarantee contract-based design.

\section{Theoretical Background}
\subsection{Contract Theory Background}
\label{contracts}
Contract-based design is a formal modular design methodology originally developed for component-based software systems \cite{bauer2012moving}. A component's behavior can be specified in terms of a guarantee that must be provided when its environment satisfies a certain assumption. This pairing of an assumption with a guarantee provides the basis for defining a contract. A contract algebra can be developed in which different contract operations can be defined which enable comparison between and combinations of contracts, formalizing modularity, reusability, hierarchy etc.~\cite{romeo2018quotient}. A comprehensive meta-theory of contracts is presented in~\cite{benveniste2018contracts}. In the following, we will introduce a variant of assume-guarantee contracts that incorporates a directive-response architecture.
\subsection{Directive-Response Architecture}
\label{dr_arch}
In a centralized approach for contingency management, recovery from failures is achieved by communicating with nearly every module in the system from a central module, hence increasing the system's complexity and potentially making it more error-prone \cite{wongpiromsarn2008distributed}.
The Mission Data system (MDS), developed by JPL as a multi-mission information and control architecture for robotic exploration spacecraft, was an approach to unify the space system software design architecture. MDS includes failure handling as an integral part of the design \cite{dvorak2000software,ingham2005engineering}. It is based on the state analysis framework, a system engineering methodology that relies on a state-based control architecture and explicit models of the system behavior. Fault detection in MDS is executed at the level of the modules, which report if they cannot reach the active goal and possible recovery strategies. Resolving failures is one of the tasks the system was designed to be capable of and not an unexpected situation \cite{dvorak2000software,rasmussen2001goal}.
Another architecture based on the state analysis framework is the Canonical Software Architecture (CSA) used on the autonomous vehicle Alice by the Caltech team in the DARPA Urban Challenge in 2007. The CSA enables decomposition of the planning system into a hierarchical framework, respecting the different levels of abstraction at which the modules are reasoning and the communication between the modules is via a directive-response framework \cite{burdick2007sensing}. This framework enables the system to detect and react to unexpected failure scenarios, which might arise from changes in the environment or hardware and software failures in the system \cite{wongpiromsarn2008distributed}.
In this paper we are trying to capture the MDS and CSA approaches by incorporating directive-response techniques into a contract framework.
  \begin{figure}
      \centering
      \includegraphics[width=3in]{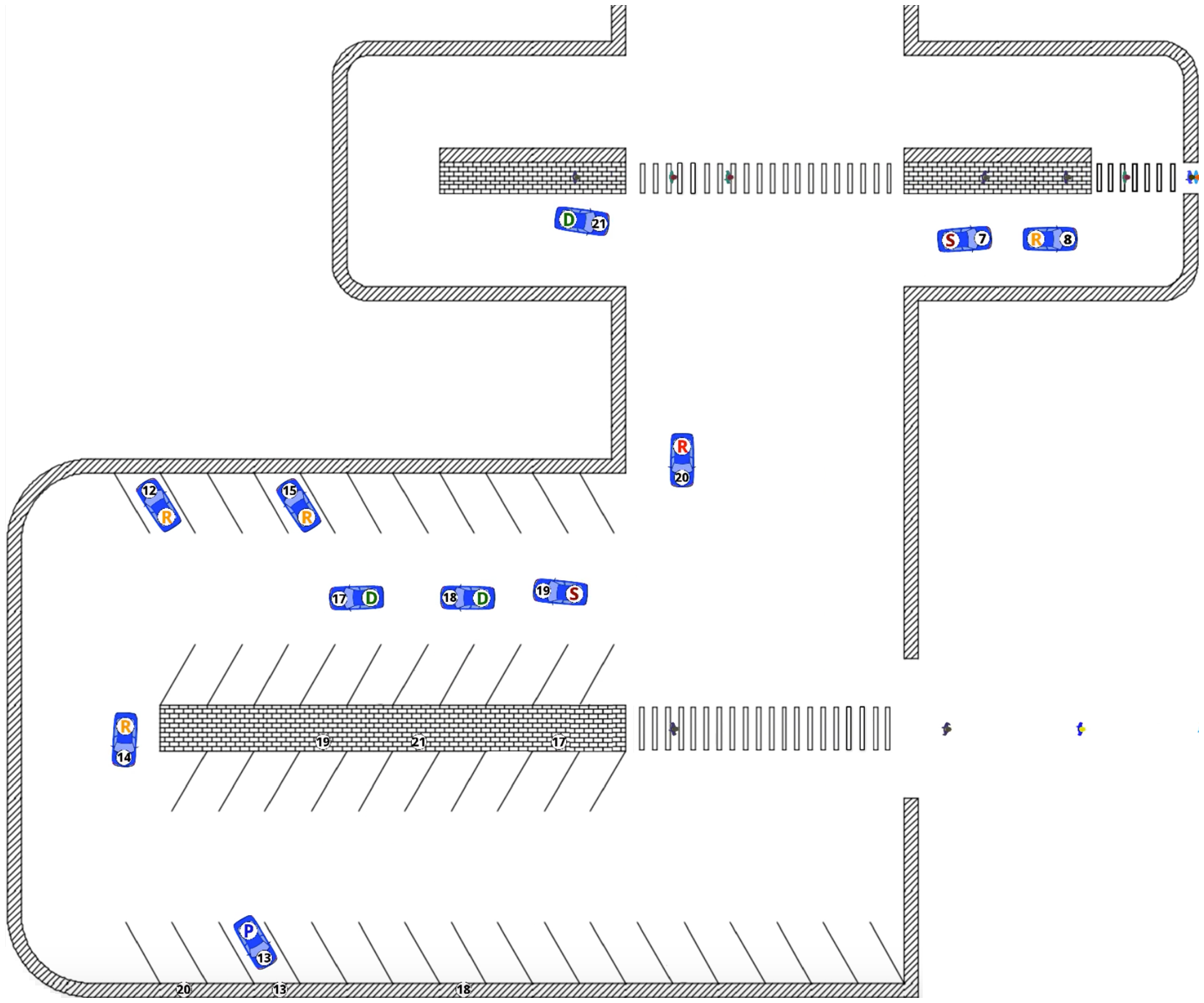}
      \caption{Snapshot from our AVP implementation showing human agents and vehicles as well as the parking lot topology.}
      \label{fig:lot}
  \end{figure}
\subsection{Directive-Response Contract Framework}
In this paper, we propose a contract-based design framework incorporating a directive-response architecture to enable reactivity to failures in the system.
System components can be abstracted as \textit{black boxes} constrained by assume-guarantee contracts that specify the behavior of the integrated system. 
Components communicate with one another by exchanging directives and responses, potentially acting according to a contingency plan that specifies how to react to possible failures. The higher module sends a directive, and the lower module chooses its responses according to its status in achieving the directive's intended goal. 
The system components are composed to satisfy the overall system requirements while interacting with the environment, such as safety and liveness specifications.
\section{Motivating Example}
The motivating example that we are developing in this paper is automated valet parking (AVP), as introduced in the previous section. The goal of this system is to automate the parking and retrieving process for multiple cars concurrently, while providing efficient operations in a safe environment.
\subsection{Overall Specification}
\label{AVPspec}
To be a successful operation, the AVP system needs to provide guarantees to customers regarding their safety and that their car will eventually be returned. These specifications can be written in linear temporal logic (LTL) \cite{pnueli1977temporal}. For a detailed discussion on LTL, see \cite{baier2008principles}. The $\square$ symbol represents the ``always'' operator and the $\lozenge$ represents ``eventually''. These are operators on predicates or traces.
An example of the specification is the following:
\begin{property}[Safety]
$\square \neg \texttt{collision}$
(Always no collision.)
\end{property}
and 
\begin{property}[Liveness]
$\square \texttt{healthy} \Rightarrow \lozenge \texttt{Returned}$
(Healthy car will eventually be returned.)
\end{property}
where the predicate $\texttt{collision}$ is \textit{True} if more than one car or pedestrian occupy the same space, and $\texttt{healthy}$ and $\texttt{Returned}$ are predicates which correspond to the status of the car, where $\texttt{healthy}$ is \textit{True} if the car does not have a failure and $\texttt{Returned}$ is \textit{True} once the control of the car has been given back to the customer. These specifications have to be satisfied for any implementation of the system and we will show this in our proof of the correctness of the composed system.
\section{Mathematical formulation}
To provide a formal description of the contracts and the components, we will introduce the mathematical background in this section. We will provide definitions regarding the geometry of the path planning, introduce the variables of our AVP world, and define the directive response framework and components.
\subsection{Geometry}
\begin{definition}[Path]
A \textit{path} is a continuous map $p: [0, 1] \rightarrow \mathbb{R}^2$. 
For each path $p$, let $p_h: [0, 1] \rightarrow (-180, 180]$ be such that $p_h(s)$ is the \textit{heading angle} measured in degrees from the abscissa to $p'(s)$, the derivative vector of $p$ with respect to $s$. For $t \in [0, 1]$, let $\tilde{p}(t)$ denote the element $p(t) \times p_h(t)$ of $\mathbb{R}^3$.
\end{definition}
We will denote the set of all paths by $\mathbf{P}$ and, by abuse of notation, we will also use $p$ to denote $p([0,1])$, the image of $[0,1]$ under $p$.
 \begin{definition}[Curvature feasibility]
Given $\kappa > 0$ and a path $p$, \textit{$\kappa$-feasible}$(p)$ is set to $\texttt{True}$ if and only if $p$ is twice differentiable on $[0,1]$, and its curvature $\frac{\abs{\det(p'(s), p''(s))}}{\norm{p'(s)}^3} < \kappa$ for $s \in [0, 1]$.
\end{definition}


\begin{definition}[$\delta$-corridor]
Let $\mathbb{B}\coloneqq\{\texttt{True}, \texttt{False}\}$.
If $p \in \mathbf{P}$, and $\delta: \mathbf{P} \times [0, 1] \times \mathbb{R}^3 \rightarrow \mathbb{B}$ is such that the corresponding subset: $$\Gamma_{\delta}(p) \coloneqq  \bigcup_{s \in [0,1]} \Gamma_{\delta}(p,s),$$ where $\Gamma_{\delta}(p,s) \coloneqq \{ (x,y,\theta) \in  \mathbb{R}^3 \mid \delta(p, s, (x,y,\theta)) = \texttt{True} \}$ such that $\Gamma_{\delta}(p, s)$ is open and contains $\tilde{p}(s)$ then we say $\Gamma_{\delta}(p)$ is a \textit{$\delta$-corridor} for $p$.
\end{definition}
\subsection{AVP World}

\textit{Building Blocks}:
In this section we will introduce naming symbols for objects that exist in the AVP world.
\begin{definition}[AVP World]
The \textit{AVP world} consists of the following
\begin{enumerate}
    \item A distinguished set of indexing symbols $\mathbf{T} := \{t, t', t'',...\}$ denoting time.
    \item A set of typed variables $\mathcal{U}$ to denote actions, states, channels, etc.
    \item The following set of constants: $\mathbf{C}$, $\mathbf{G}$ where
    \begin{enumerate}
        \item $\mathbf{C}$, a set of symbols, is called the customer set.
        \item $\mathbf{G}$, a set of symbols, is called the garage set containing the following constant values
        \begin{enumerate}
                \item $\mathbf{G}.\textit{drivable\_area}  \subseteq \mathbb{R}^3$, the set of configurations that vehicles are allowed to be in.
                \item $\mathbf{G}.\textit{walkable\_area}  \subseteq \mathbb{R}^2$, the area that pedestrians are allowed to walk on.
                \item         $\mathbf{G}.\textit{entry\_configurations} \subseteq \mathbb{R}^3$, a set of configurations that the customers can deposit their car in.
                \item $\mathbf{G}.\textit{return\_configurations}  \subseteq \mathbb{R}^3$, a set of configurations that the car should be returned in.
            \item $\mathbf{G}.\textit{parking\_spots} \in \mathbb{N}$, the number of parking spots available in the parking lot.
            \item $\mathbf{G}.\textit{interior} \subseteq \mathbb{R}^2$, the area inside the parking garage.
        \end{enumerate}
    \end{enumerate}
\end{enumerate}

\textit{Directive-Response Message Types:}
Each channel in the system is associated with a unique message type. The following are all the message types in our AVP system.\\
$\mathbf{A}(\cdot)$, directive types:
    \begin{enumerate}
        \item $\mathbf{A}(\Customer) \coloneqq \{\texttt{Park}, \texttt{Retrieve}\}$.
        \item $\mathbf{A}(\Supervisor) \coloneqq \mathbb{R}^6$.
        \item $\mathbf{A}(\Planner) \coloneqq \mathbf{P}$.
        \item $\mathbf{A}(\Tracker) \coloneqq \mathbf{I} \subseteq \mathbb{R}^2$, the set of all control inputs.
        \end{enumerate}
 $\mathbf{B}(\cdot)$, response types:
    \begin{enumerate}
        \item $\mathbf{B}(\Customer) \coloneqq \{ \texttt{Failed} \}$.
        \item $ \mathbf{B}(\Supervisor) \\\coloneqq \{\texttt{Rejected}, \texttt{Accepted}, \texttt{Returned}\}$.
        \item $ \mathbf{B}(\Planner) = \mathbf{B}(\Tracker) \\\coloneqq \{\texttt{Blocked}, \texttt{Failed}, \texttt{Completed}\}$.
    \end{enumerate}
For each type $\mathbf{T}$ we will denote by $\tilde{\mathbf{T}}$ the product type $\mathbf{T} \times \mathbf{C}$ which will be used to associate a message of type $\mathbf{T}$ with a specific customer in $\mathbf{C}$.
In addition, we will use $\mathbf{Id}$ to denote the set of message IDs.

\end{definition}

\textit{Behavior:}
For each variable $u \in \mathcal{U}$, we denote by $\text{type}(u)$ the \textit{type} of $u$, namely, the set of values that it can take. The types of elements of $\mathbf{T}$ are taken to be $\mathbb{R}_{\geq 0}$.
\begin{definition}[Behavior]
\label{behavior}
Let $Z$ be an ordered subset of variables in $\mathcal{U}$. A $Z$-\textit{behavior} is an element of $\mathcal{B}(Z) \coloneqq {(\prod_{z\in Z} \text{type}(z))}^{\mathbb{R}_{\geq 0}}$. Given $\sigma_Z \in \mathcal{B}(Z)$ and $\tau \in \mathbf{T}$, we will call $\sigma_Z(\tau)$ the \textit{valuation} of $Z$ at time $\tau$. If $z \in Z$, we will also denote by $z(\tau)$ the value of $z$ at time $\tau$.
\end{definition}
Note that each behavior in $Z \subseteq \mathcal{U}$ can be ``lifted'' to a set of behaviors in $\mathcal{U}$ by letting variables that are not contained in $Z$ assume all possible values in their domains. 
Additionally, the set of behaviors $\mathcal{B}(Z)$ can be lifted to a set of behaviors in $\mathcal{B}(\mathcal{U})$ in a similar way. To ease notational burden for the reader, we will take the liberty of not explicitly making any reference to the ``lifting'' operation in this paper when they are in use unless there is any ambiguity that may result from doing so.
\begin{definition}[Constraint]
A \textit{constraint} $k$ on a set of variables $Z$ is a function that maps each behavior of $Z$ to an element of $\mathbb{B}$, the Boolean domain. In other words, $k \in \mathbb{B}^{\mathcal{B}(Z)}$.
\end{definition}
Note that by ``lifting'', a constraint on a set of variables $Z$ is also a constraint on $\mathcal{U}$.
\begin{definition}[Channel variables]
For each component $X$ and another component $Y$, we can define two types of \textit{channel variables}:
\begin{itemize}
    \item $X_{\leftarrow Y}$, denoting an incoming information flow from $Y$ to $X$.
    \item $X_{\rightarrow Y}$, denoting an outgoing information flow from $X$ to $Y$.
\end{itemize}
In this work, we assume that $X_{\rightarrow Y}$ is always identical to $Y_{\leftarrow X}$.
Each channel variable must have a well-defined message type and each message $m$ has an ID denoted by $\textrm{id}(m) \in \textbf{Id}$. If the message has value $v$, then we will denote it by $[v, \textrm{id}(m)]$, but we will often refer to it as $[v]$ whereby we omit the ID part to simplify the presentation.
Intuitively, given a behavior, a channel variable $x$ is a function that maps each time step to the message the associated channel is broadcasting at that time step.
\end{definition}
\begin{definition}[System]
A \textit{system} $M$ consists of a set of each of the following
\begin{enumerate}
    \item internal variables/constants $\text{var}^M_{X}$,
    \item output channel variables $\text{var}^M_{Y}$,
    \item input channel variables $\text{var}^M_{U}$,
    \item constraints $\text{con}_M$ on $\text{var}^M_{X} \cup \text{var}^M_{Y} \cup \text{var}^M_{U}$.
\end{enumerate}
\end{definition}
A behavior of a system $M$ is an element of the set of behaviors that correspond to $\text{var}^M_{X} \cup \text{var}^M_{Y} \cup \text{var}^M_{U}$ subject to $\text{con}_M$. This is denoted by $\mathcal{B}(M)$.

\textit{Directive-response:}
Before introducing directive-response systems, for any predicates $A$ and $B$, we define the following syntax:
\begin{equation}
A \leadsto B \coloneqq \forall t:: A(t)  \Rightarrow \exists t' \geq t :: B(t').
\tag{``leads to''}
\end{equation}
\begin{equation}
        A \preceq B \coloneqq  \forall t:: B(t)   \Rightarrow \exists t' \leq t :: A(t'). 
        \tag{``precedes''}
\end{equation}
\begin{equation}
\square_{\geq t} A \coloneqq \forall t' \geq t :: A(t').
\tag{``always from $t$''}
\end{equation}
\begin{equation}
\begin{split}
\mathrm{starts\_at}(A, t) \coloneqq A(t) \land
 \forall t' < t :: \lnot A(t').
\end{split}
\end{equation}

If $M$ is a set-valued variable, then we define:
\begin{equation}
\begin{split}
\mathrm{persistent}(M) \coloneqq \forall t:: \forall m :: m \in M(t)
\Rightarrow \square_{\geq t}(m \in M).
\end{split}
\end{equation}
\begin{definition}[Directive-response system]
A~\textit{directive-res-ponse system} $M$ is a system such that for each output (resp., input) channel variable $chan$ there is an internal variable $\textit{send}_{chan}$ (resp., $\textit{receive}_{chan}$) whose domain is a collection of sets of messages that are of the type associated with $chan$.
If $chan$ is an output channel variable, there is a causality constraint $k_{chan} \in \text{con}_M$ defined by:
    \begin{equation}
    \begin{split}
    k_{chan} \coloneqq m \in \textit{send}_c \preceq m = chan. 
    \end{split}
    \end{equation}
That is, a message must be sent before it shows in the channel. Otherwise if $chan$ is an input channel variable:
    \begin{equation}
    \begin{split}
    k_{chan} \coloneqq m = chan \preceq m \in \textit{receive}_{chan}.
    \end{split}
    \end{equation}
\end{definition}
Namely, a message cannot be received before it is broadcasted.
\begin{definition}[Lossless directive-response system]
A lossless directive-response system is a directive-response system such that if $chan$ is an output channel then
\begin{equation}
\begin{split}
\textrm{persistent}(\text{send}_{chan}) \land (m \in \text{send}_{chan} \leadsto m = chan).
\end{split}
\end{equation}
and if $chan$ is an input channel
\begin{equation}
\begin{split}
\textrm{persistent}(\text{receive}_{chan}) \land (m = chan \leadsto m \in \text{receive}_{chan}).
\end{split}
\end{equation}
\end{definition}
\begin{definition}[Assume-guarantee contracts]
An \textit{assume-guarantee contract} $\mathcal{C}$ for a directive-response system $M$ consists of a pair of behaviors $A$, $G$ of $M$ and denoted by $\mathcal{C} = (A,G)$. An \textit{environment} for $\mathcal{C}$ is any set of all behaviors that are contained in $A$ while an \textit{implementation} of $\mathcal{C}$ is any set of behaviors that is contained in $A \Rightarrow G$. $\mathcal{C}$ is said to be \textit{saturated} if the guarantee part satisfies $G = (\lnot A \lor G) = (A \Rightarrow G)$.
\end{definition}
Note that any contract can be converted to the saturated form without changing its sets of environments and implementations. The saturated form is useful in making contract algebra less cumbersome in general. If $M$ is a system, then we say $M$ satisfies $\mathcal{C}$ if $\mathcal{B}(M) \subseteq (A \Rightarrow G)$. Furthermore, the system composition $M_1 \times M_2$ of $M_1$ and $M_2$ is a system whose behavior is equal to $\mathcal{B}(M_1) \cap \mathcal{B}(M_2)$.
\begin{definition}[Customer]
A \textit{customer} is an element of $\mathbf{C}$. Corresponding to each $c \in \mathbf{C}$ is a set of $\mathcal{U}$ variables $\text{var}(c)$ that include $c.x$, $c.y$ (the coordinates of the customer him/herself), $\textit{c.car.x}$, $\textit{c.car.y}$, $\textit{c.car.}\theta$ (the coordinates and heading of the customer's car), $\textit{c.car.healthy}$, whether the car is healthy, $\textit{c.controls.v}$, $\textit{c.controls.}\varphi$ (the velocity and steering inputs to the vehicle), $\textit{c.car.}\ell$ (the length of the car), $\textit{c.car.towed}$ (whether the car is being towed). We will use the shorthand $\textit{c.car.state}$ to mean the 3-tuple $(\textit{c.car.x, c.car.y, c.car}.\theta)$.
\end{definition}
For each behavior in $\mathcal{B}(\mathcal{U})$, we require each $c \in \mathbf{C}$ for which $\textit{c.car.towed}$ is $\textit{False}$ to satisfy the following constraints that describe the Dubins car model:
\begin{align}
\label{dubins}
\begin{split}
    \frac{\textit{d(c.car.x})}{\textit{dt}}(t) &= \textit{c.controls.v}(t) \cos(\textit{c.car.}\theta(t)) \\
    \frac{\textit{d(c.car.y})}{\textit{dt}}(t) &= \textit{c.controls.v}(t) \sin(c.car.\theta(t)) \\
    \frac{\textit{d(c.car.}\theta)}{\textit{dt}}(t) &= \frac{\textit{c.controls.v}(t)}{\textit{c.car.}\ell} \tan(\textit{c.controls}.\varphi(t))
\end{split}
\end{align}

\subsection{AVP System}
By treating the $\Customer$ as an external component, the AVP system consists of three internal components: \Supervisor, \Planner\: and \Tracker. These systems are described below. 
\subsubsection{\Customer}
The environment in which the system shall operate consists of the customers and the pedestrians, which we will call a $\Customer$. A customer drops off the car at the drop-off location and is assumed to make a request for the parked car back from the garage eventually. The pedestrians are also controlled by the environment. When a pedestrian was generated by the environment, they start walking on the crosswalks. Pedestrians are confined to the pedestrian path, meaning they will not leave the crosswalk and walkway areas and their dynamics are continuous, meaning no sudden jumps. The cars move according to their specified dynamics. This includes a breaking distance depending on their velocity and maximum allowed curvature. For a formal description, refer to Table~\ref{customer_table}. Below are some constraints we impose on this module.
\vspace{-2mm}
\begin{table}
    \caption{$\Customer$ directive-response system.}
    \label{customer_table}
    \scriptsize
    \begin{tabularx}{\columnwidth}{|XX|}
        \hline

        \textbf{Internal variables/constants} $\text{var}_{X}$& \\
        \hline
        $\mathbf{C}$ & The set of all customers in the AVP world. \\
        \hline
        
        \textbf{Outputs} $\text{var}_{Y}$ & \\
        \hline
        $\Customer_{\rightarrow \Supervisor}$ & An output channel of type $\tilde{\mathbf{A}}(\Customer)$. \\
        \hline
        
        \textbf{Inputs} $\text{var}_{U}$& \\
        \hline
        $\Customer_{\leftarrow \Supervisor}$ & An input channel of type $\tilde{\mathbf{B}}(\Supervisor)$. \\
        $\Customer_{\leftarrow \Tracker}$ & An input channel of type $\tilde{\mathbf{A}}(\Tracker)$. \\
        \hline
        
        \textbf{Constraints} $\text{con}_M$ & \\
        \hline
        Vehicle dynamics & See \eqref{dubins} \\
        Car and pedestrian limits & \eqref{control_limits} and \eqref{pedestrian_limits}. \\
        \hline
    \end{tabularx}
\end{table}

\begin{equation}
\label{control_limits}
    \begin{split}
        \forall c \in \mathbf{C} :: \square (v_{\min} \leq \textit{c.controls.v} \land \textit{c.controls.v} \leq v_{\max} \\
        \land \varphi_{\min} \leq \textit{c.controls}.\varphi \land \textit{c.controls}.\varphi \leq \varphi_{\max})
    \end{split}
\end{equation}
\begin{equation}
\label{pedestrian_limits}
    \begin{split}
        \forall c \in \mathbf{C} :: \forall s. \Big\lVert \Big(\frac{d(c.x)}{dt}(s), \frac{d(c.y)}{dt}(s) \Big) \Big\rVert \leq v_{ped, \max}.
    \end{split}
\end{equation}
\subsubsection{\Supervisor} 
A $\Supervisor$ component is responsible for the high level decision making in the process. It receives the $\Customer\:$ requests and processes them by sending the appropriate directives to the $\Planner$ to fulfill a task. A $\Supervisor$ determines whether a car can be accepted into the garage or rejected. It also receives responses from the $\Planner$. A $\Supervisor$ is to be aware of the reachability, the vacancy, and occupied spaces in the lot, as well as the parking lot layout. Formally, a $\Supervisor$ is a lossless directive-response system described by Table~\ref{supervisor_table}.
\begin{table}
    \caption{$\Supervisor$ directive-response system.}
    \label{supervisor_table}
    \scriptsize
    \begin{tabularx}{\columnwidth}{|XX|}
        \hline

        \textbf{Internal variables/constants} $\text{var}_{X}$ & \\
        \hline
        $\mathbf{G}.*$ & All $\mathbf{G}$ objects.\\
        $num\_active\_customers$ & The number of cars currently being served in the parking lot. \\
        \hline
        
        \textbf{Outputs} $\text{var}_{Y}$ & \\
        \hline
        $\Supervisor_{\rightarrow \Customer}$ & An output channel of type $\tilde{\mathbf{B}}(\Supervisor)$. \\
        $\Supervisor_{\rightarrow \Planner}$ & An output channel of type $\tilde{\mathbf{A}}(\Supervisor)$. \\
        \hline
        
        \textbf{Inputs} $\text{var}_{U}$ & \\
        \hline
        $\Supervisor_{\leftarrow \Customer}$ & An input channel of type $\tilde{\mathbf{A}}(\Customer)$. \\
        $\Supervisor_{\leftarrow \Planner}$ & An input channel of type $\tilde{\mathbf{B}}(\Planner)$. \\
        \hline
        
        \textbf{Constraints} $\text{con}_M$& \\
        \hline
        Parking lot topology & Any specific geometric constraints on $\mathbf{G}.*$.\\
        Number of active customers & $num\_active\_customers$ must be equal to the number of cars that have been accepted but not yet left the parking lot.\\
        \hline
    \end{tabularx}
\end{table}

\subsubsection{\Planner} A $\Planner$ system receives directives from the \Supervisor\: to make a car reach a specific location in the parking lot. A $\Planner$ system has access to a planning graph determined from the parking lot layout, and thus can generate executable trajectories for the cars to follow. The $\Planner$ is aware of the locations of the agents and the obstacles in the parking lot from the camera system. A $\Planner$ is a lossless directive-response system described by Table~\ref{planner_table}.

\begin{table}
    \scriptsize
    \caption{$\Planner$ directive-response system.}
    \label{planner_table}
    \begin{tabularx}{\columnwidth}{|XX|}
        \hline
        \textbf{Interval variables/constants} $\text{var}_{X}$ & \\
        \hline
        $\mathbf{G}.*$ & All $\mathbf{G}$ objects. \\
        $\{c.car.x, c.car.y, c.car.\theta \mid c \in \mathbf{C} \}$ &  The configurations of all cars in AVP world. \\
        $\kappa$ & Maximum allowable curvature. \\
        \hline
        
        \textbf{Outputs}  $\text{var}_{Y}$&\\
        \hline
        $\Planner_{\rightarrow \Supervisor}$ & An output channel of type $\tilde{\mathbf{B}}(\Planner)$. \\ 
        $\Planner_{\rightarrow \Tracker}$ & An output channel of type $\tilde{\mathbf{A}}(\Planner)$. \\
        \hline
        
        \textbf{Inputs} $\text{var}_{U}$&  \\
        \hline
        $\Planner_{\leftarrow \Supervisor}$ & An input channel of type $\tilde{\mathbf{A}}(\Supervisor)$. \\
        $\Planner_{\leftarrow \Tracker}$ & An input channel of type $\tilde{\mathbf{B}}(\Tracker)$. \\ 
        \hline
        
        \textbf{Constraints} $\text{con}_{M}$& \\
        \hline
        Parking lot topology & Any specific geometric constraints on $\mathbf{G}.*$.\\
        $\kappa$ & Maximum allowable curvature given car dynamics and input constraints. \\
        \hline
    \end{tabularx}
\end{table}

\subsubsection{\Tracker} A $\Tracker$ system is responsible for the safe control of cars that are accepted into the garage by a $\Supervisor$. It receives directives from a $\Planner$ consisting of executable paths to track and send responses based on the task status to a $\Planner$. See Table~\ref{tracker_table}.

\begin{table}
    \caption{$\Tracker$ directive-response system.}
    \label{tracker_table}
    \scriptsize
    \begin{tabularx}{\columnwidth}{|XX|}
        \hline
                
        \textbf{Interval variables/constants}  $\text{var}_{X}$& \\
        \hline
         $\delta$ & Corridor map.\\
         $\varepsilon_{\min,car}$ & Minimum safety distance to other cars.\\
         $\varepsilon_{\min, people}$ & Minimum safety distance to pedestrians.\\
        \hline
        
        \textbf{Outputs}  $\text{var}_{Y}$ &  \\
        \hline
        $\Tracker_{\rightarrow \Planner}$ & An output channel of type $\tilde{\mathbf{B}}(\Tracker)$. \\         $\Tracker_{\rightarrow \Customer}$ & An output channel of type $\tilde{\mathbf{A}}(\Tracker)$. \\ 

        \hline
        
        \textbf{Inputs}  $\text{var}_{U}$&  \\
        \hline
        $\Tracker_{\leftarrow \Planner}$ & An input channel  of type $\tilde{\mathbf{A}}(\Planner)$. \\ 
        \hline

        \textbf{Constraints}  $\text{con}_{M}$& \\
        \hline
        Corridor constraints & 
        In our implementation, we define the $\delta$-corridor for any path $p$ to be the open set containing points whose distance to the closest point in $p$ does not exceed 3 meters. \\
        $\varepsilon_{\min,car}$, $\varepsilon_{\min,people}$ & These values are determined based on the dynamics and the uncertainty $\Delta_{Car}$.\\
        \hline
    \end{tabularx}
\end{table}

\begin{figure}
      \centering
      \includegraphics[width=\linewidth*3/4]{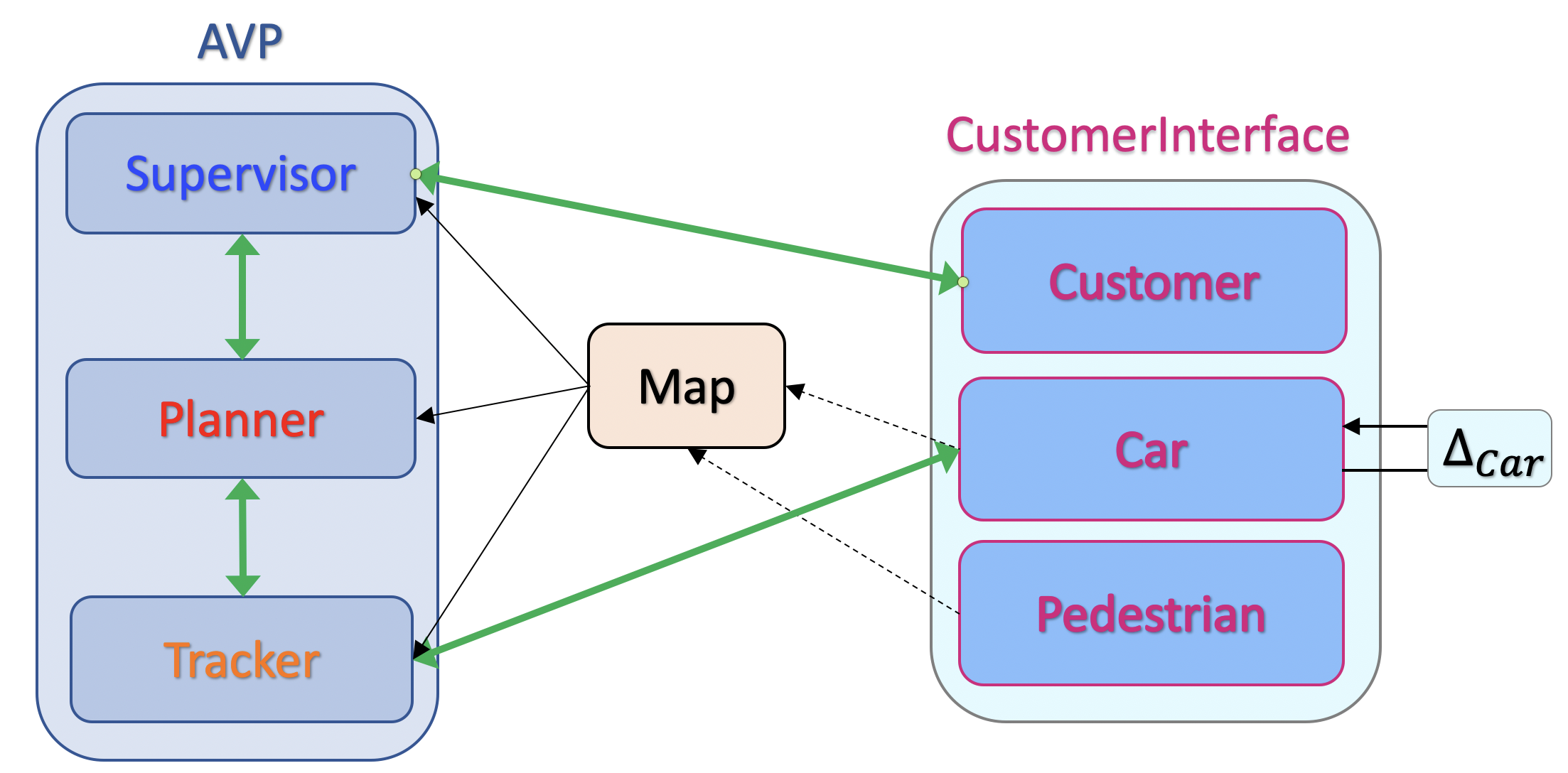}
      \caption{This figure depicts the contracts and components in the AVP system. The green arrows represent directive-response assume-guarantee contracts, solid black arrows represent communication, and dashed black arrows represent passive information flow (observing movement of the agents). The $\Delta$ in the car component represents the possibility of failure and uncertainty.}
      \label{fig:contracts}
\end{figure}
\section{AVP Contracts}
In this section we will define the contracts for each of the modules in our system. These contracts are the guidelines for the implementation, and will be used to verify each of the components, as well as the composed system.
In Figure~\ref{fig:contracts} the green arrows represent directive-response assume-guarantee contracts, solid black arrows represent communication, and dashed black arrows represent passive information flow (observing movement of the agents). The $\Delta$ in the car component represents the possibility of failure and uncertainty.
\begin{contract}[$\mathcal{C}_{\Customer}$]
The following is the contract for the \Customer.
\begin{itemize}
    \small
    \item Assumes
    \begin{itemize}
        \item If the \Customer\:sends a request to the $\Supervisor$, then they will receive a response from the $\Supervisor$:
\begin{equation}
\label{customer:assume:request_always_receive_response}
\begin{split}
&\forall c \in \mathbf{C}::([m, c] \in \text{send}_{\Customer_{\rightarrow \Supervisor}} \\&\leadsto \exists r \in \mathbf{B}(\Supervisor) ::\\ &[r, c] \in 
\text{receive}_{\Customer_{\leftarrow \Supervisor}}).
\end{split}
\end{equation}
        \item If the car is healthy and accepted by the garage, it will be returned after being summoned:
\begin{equation}
\label{customer:assume:accepted_and_healthy_and_requested_then_returned}
\begin{split}
   &\forall c \in \mathbf{C} :: (\square_{\geq0}(\textit{c.car.healthy}) \\ &\land ([\texttt{Accepted}, c] \in \text{receive}_{\Customer_{\leftarrow \Supervisor}} \\&\land
   [\texttt{Retrieve}, c] \in \text{send}_{\Customer_{\rightarrow \Supervisor}}) \leadsto \\ 
   &[\texttt{Returned}, c] \in \text{receive}_{\Customer_{\leftarrow \Supervisor}}
   ).
\end{split}
\end{equation}
    \end{itemize}
    \item Guarantees
    \begin{itemize}
        \item When the request is accepted, the \Customer\:should not tamper with the car controls until the car is returned (i.e., control signals should match the directive) :
\begin{equation}
\label{customer:guarantee:no_tampering}
    \begin{split}
    &\forall c \in \mathbf{C}:: \forall t:: \forall (v, \varphi) \in \mathbf{I} :: \\ 
    &([\texttt{Accepted}, c] \in \text{receive}_{\Customer_{\leftarrow \Supervisor}}(t) \\ 
    &\land \lnot ([\texttt{Returned}, c] \in \text{receive}_{\Customer_{\leftarrow \Supervisor}}(t))  \\
    &\Rightarrow([(v, \varphi), c] \in \text{receive}_{\Customer_{\leftarrow \Tracker}}(t) \\ &\land \forall t' < t :: [(v, \varphi), c] \not \in  \text{receive}_{\Customer_{\leftarrow \Tracker}}(t') \\
    &\Rightarrow \textit{c.controls.v}(t) = v \land \textit{c.controls}.\varphi(t) = \varphi)).
    \end{split}
\end{equation}
    \item When the \Customer\: is not receiving any new input signal, then it keeps the control inputs at zero:
\begin{equation}
    \label{customer:guarantee:no_tracker_input}
    \begin{split}
    &\forall c \in \mathbf{C}:: \forall t:: \\ 
    &([\texttt{Accepted}, c] \in \text{receive}_{\Customer_{\leftarrow \Supervisor}}(t)\land \\ 
    & \lnot ([\texttt{Returned}, c] \in \text{receive}_{\Customer_{\leftarrow \Supervisor}}(t)) \Rightarrow \\
    &(\forall (v, \varphi) \in \mathbf{I} ::  [(v, \varphi), c] \in \text{receive}_{\Customer_{\leftarrow \Tracker}}(t) \\ 
    &\Rightarrow \exists t' < t :: [(v, \varphi), c] \in  \text{receive}_{\Customer_{\leftarrow \Tracker}}(t')) \\
    &\Rightarrow\textit{c.controls.v}(t) = 0 \land \textit{c.controls}.\varphi(t) = 0))).
    \end{split}
\end{equation}
        \item From sending a request until receiving a response, the car must stay in the deposit area:
    \begin{equation}
        \label{customer:guarantee:stay_in_box}
        \begin{split}
            &\forall c \in \mathbf{C} :: \square_{\geq0}([\texttt{Park}, c] \in \text{send}_{\Customer_{\rightarrow \Supervisor}} \\
            &\land  [\texttt{Accepted}, c] \not\in \text{receive}_{\Customer_{\leftarrow \Supervisor}} \\ 
            &\land [\texttt{Rejected}, c] \not\in \text{receive}_{\Customer_{\leftarrow \Supervisor}} \\
            &\Rightarrow \textit{c.car.state} \in \mathbf{G}.\textit{entry\_configurations}).
        \end{split}
    \end{equation}
        \item After the car is deposited, the customer will eventually summon it:
    \begin{equation}
    \label{customer:guarantee:will_pick_up}
        \begin{split}
            &\forall c \in \mathbf{C}:: [\texttt{Accepted}, c] \\ 
            &\in \text{receive}_{\Customer_{\leftarrow \Supervisor}} \leadsto \\
            &[\texttt{Retrieve}, c] \in \text{send}_{\Customer_{\leftarrow \Supervisor}}.
        \end{split}
    \end{equation}
        \item Pedestrians will only walk on ``walkable'' area:
        \begin{equation}
        \label{customer:guarantee:on_walkable_area}
        \begin{split}
            \forall c \in \mathbf{C}:: \square_{\geq0} ((c.x, c.y) \in \mathbf{G}.\textit{walkable\_area}).
        \end{split}
        \end{equation}
        \item Pedestrians will not stay on crosswalks forever:
        \begin{equation}
        \label{customer:guarantee:transient_crosswalks}
        \begin{split}
            &\forall c \in \mathbf{C}:: ((c.x, c.y) \\
            &\in \mathbf{G}.\textit{walkable\_area} \cap \mathbf{G}.\textit{drivable\_area} \\
            &\leadsto (c.x, c.y) \not\in \mathbf{G}.\textit{walkable\_area} \cap \mathbf{G}.\textit{drivable\_area}).
        \end{split}
        \end{equation}
        \item If the car is not healthy and not towed, it cannot move:
        \begin{equation}
        \label{customer:guarantee:unhealthy_cannot_move}
            \begin{split}
        &\forall c \in \mathbf{C} :: \square_{\geq0}(\lnot \textit{c.car.healthy} \land \lnot \textit{c.car.towed} \Rightarrow  \\ 
        &\textit{c.controls.v} = 0 \land \textit{c.controls}.\varphi = 0).
            \end{split}
        \end{equation}
        \item Sending a \texttt{Retrieve} message must always be preceded by receiving an \texttt{Accepted} message from the $\Supervisor$:
        \begin{equation}
        \label{customer:guarantee:must_receive_accepted_before_retrieve}
            \begin{split}
                &\forall c \in \mathbf{C} :: [\texttt{Accepted}, c] \in \text{receive}_{\Customer_{\leftarrow \Supervisor}} \\ 
                &\preceq [\text{Retrieve}, c] \in \text{send}_{\Customer_{\rightarrow \Supervisor}}.
            \end{split}
        \end{equation}
        \item If a customer receives \texttt{Rejected} or \texttt{Returned} from the $\Supervisor$, then they must leave the lot forever:
        \begin{equation}
        \label{customer:guarantee:rejected_leave_forever}
            \begin{split}
                &\forall c \in \mathbf{C} :: \forall t :: [\texttt{Rejected},c] \\ 
                &\in \text{receive}_{\Customer_{\leftarrow \Supervisor}}(t) \\ 
                &\lor [\texttt{Returned},c] \in \text{receive}_{\Customer_{\leftarrow \Supervisor}}(t) \Rightarrow \\ 
                &\exists t' > t:: \square_{\geq t'}((\textit{c.car.x}, \textit{c.car.y}) \not \in \mathbf{G}.\textit{interior}).
            \end{split}
        \end{equation}
    \end{itemize}
\end{itemize}
\end{contract}

\begin{contract}[$\mathcal{C}_{\Supervisor}$] The contract for the $\Supervisor$ is as follows.
\begin{itemize}
    \small
    \item Assumes
    \begin{itemize}
        \item Towing eventually happens after the $\Supervisor$ is alerted of car failure:
        \begin{equation}
        \label{supervisor:assume:all_broken_will_be_towed}
        \begin{split}
            &\forall c \in \mathbf{C}:: \forall t:: [\texttt{Failed},c] \in \text{receive}_{\Supervisor_{\leftarrow \Planner}}(t) \Rightarrow \\
            &\exists t':: \square_{\geq t'}(
            \textit{c.car.towed} \land (\textit{c.car.x}, \textit{c.car.y}) \\ &\not\in \mathbf{G}.\textit{interior}).
        \end{split}
        \end{equation}
        \item If a car fails, then the $\Planner$ reports \texttt{Failed}:
        \begin{equation}
        \label{supervisor:assume:fail_report}
    \begin{split}
        &\forall c \in \mathbf{C} :: \lnot \textit{c.car.healthy} \\ &\leadsto
        [\texttt{Failed}, c] \in \text{receive}_{\Supervisor_{\leftarrow \Planner}}.
    \end{split}
\end{equation}
        \item Cars making requests are deposited correctly by the customer:
        \begin{equation}
        \label{supervisor:assume:cowlagi}
            \begin{split}
                &\forall c \in \mathbf{C} :: \square_{\geq0} ([\texttt{Park}, c] \in \text{receive}_{\Supervisor_{\leftarrow \Customer}} \\
                &\land ([\texttt{Accepted}, c] \not\in \text{send}_{\Supervisor_{\rightarrow \Customer}} \\
                &\lor [\texttt{Rejected}, c] \not\in \text{send}_{\Supervisor_{\rightarrow \Customer}})\\
                &\Rightarrow
                \textit{c.car.state} \in \mathbf{G}.\textit{entry\_configurations}).
            \end{split}
        \end{equation}
    \item If a car is healthy and summoned, then it will eventually appear at the return area and the $\Planner$ will send a \texttt{Completed} signal to the $\Supervisor$:
    \begin{equation}
    \label{supervisor:assume:planner_reports_complete}
        \begin{split}
   &\forall c \in \mathbf{C} :: (\square_{\geq0} \textit{c.car.healthy} \land
   [\texttt{Retrieve}, c] \\ &\in \text{receive}_{\Supervisor_{\leftarrow \Planner}} 
   \\ &\leadsto ([\texttt{Completed}, c] \in \text{receive}_{\Supervisor_{\leftarrow \Planner}}  \\ &\land\textit{c.car.state} \in \mathbf{G}.\textit{return\_configurations})).        \end{split}
    \end{equation}
    
    \end{itemize}
    \item Guarantees
    \begin{itemize}
        \item All requests from customers will be replied:
        \begin{equation}
        \label{supervisor:guarantee:responsive}
            \begin{split}
                &\forall c \in \mathbf{C} :: ([m, c] \in \text{receive}_{\Supervisor_{\leftarrow \Customer}} \leadsto \\
                & \exists r \in \textbf{B}(\Supervisor):: [r,c] \in \text{send}_{\Supervisor_{\rightarrow \Customer}}).
            \end{split}
        \end{equation}
        \item The $\Supervisor$ cannot send a \texttt{Returned} message to the \Customer\: unless it has received a \texttt{Completed} message from the $\Planner$ and the car is in the return area:
        \begin{equation}
        \label{supervisor:guarantee:only_send_returned_when_really_returned}
            \begin{split}
                & \forall c \in \mathbf{C} :: 
                [\texttt{Completed}, c] \in  \text{receive}_{\Supervisor_{\leftarrow \Planner}} \\ & \land
                \textit{c.car.state} \in
                \mathbf{G}.\textit{return\_configurations} \preceq \\
                 & [\texttt{Returned}, c] \in \text{send}_{\Supervisor_{\rightarrow \Customer}}.
            \end{split}
        \end{equation}
        
        \item If a car is healthy and a \texttt{Retrieve} message is received, then the last thing sent to the $\Planner$ should be a directive to the return area (the second configuration should be one of the return configurations).
        \begin{equation}
    \label{supervisor:guarantee:healthy_retrieval_implies_last_command_is_to_return_area}
        \begin{split}
   & \forall c \in \mathbf{C} :: (\square_{\geq0} \textit{c.car.healthy} \Rightarrow
   [\texttt{Retrieve}, c] \\ & \in \text{receive}_{\Supervisor_{\leftarrow \Planner}} \land \\
   & \exists p_0, p_1 \in \mathbb{R}^3 :: [(p_0, p_1), c] \in \text{send}_{\Supervisor_{\rightarrow \Planner}} \\ &\land
   \forall p_0', p_1' \in \mathbb{R}^3 :: [(p_0', p_1'), c] \in \text{send}_{\Supervisor_{\rightarrow \Planner}} \\ & \preceq [(p_0, p_1), c] \in \text{send}_{\Supervisor_{\rightarrow \Planner}} \\
   & \Rightarrow p_1 \in \mathbf{G}.\textit{return\_configurations}.        \end{split}
    \end{equation}
        
        \item If the car is healthy and if it is ever summoned, then the $\Supervisor$ will send a \texttt{Returned} message to its owner:
        \begin{equation}
        \label{supervisor:guarantee:eventual_return}
            \begin{split}
                &\forall c \in \mathbf{C} :: (\square_{\geq0} \textit{c.car.healthy} \Rightarrow \\
                &[\texttt{Retrieve}, c] \in \text{receive}_{\Supervisor_{\leftarrow \Customer}} \\       &\leadsto[\texttt{Returned},c] \in \text{send}_{\Supervisor_{ \rightarrow \Customer}}).
            \end{split}
        \end{equation}
        \item If there is a not-yet-responded-to \texttt{Park} request and the parking lot capacity is not yet reached, then the $\Supervisor$ should accept the request:
        \begin{equation}
        \label{supervisor:guarantee:acceptance}
            \begin{split}
              & \forall c \in \mathbf{C} :: \forall t::  \exists [\texttt{Park}, c] \\ & \in \text{receive}_{\Supervisor_{\leftarrow \Customer}}(t) \\
              & \land  \forall t' \leq t:: [\texttt{Rejected}, c] \not\in \text{send}_{\Supervisor_{\rightarrow \Customer}}(t') \\
              & \land [\texttt{Accepted}, c] \not\in \text{send}_{\Supervisor_{\rightarrow \Customer}}(t') \\
              & \land \textit{num\_active\_customers}(t) < \mathbf{G}.\textit{parking\_spots} \\ & \Rightarrow \exists t'' > t :: [\texttt{Accepted}, c] \\ & \in \text{send}_{\Supervisor_{\rightarrow \Customer}}(t'').
            \end{split}
        \end{equation}
        \item For every \texttt{Accepted} to or \texttt{Retrieve} from the $\Customer$ or \texttt{Blocked} from the $\Planner$, the $\Supervisor$ sends a pair of configurations to the $\Planner$, the first of which is the current configuration of the car and such that there exists a path of allowable curvature :
        \begin{equation}
        \label{supervisor:guarantee:send_pair_of_configurations}
            \begin{split}
                & \forall c \in \mathbf{C}:: [\texttt{Accepted}, c] \in \text{send}_{\Supervisor_{\rightarrow \Customer}} \\ & \lor [\texttt{Retrieve}, c] \in \text{receive}_{\Supervisor_{\leftarrow \Customer}} \\
                & \lor [\texttt{Blocked}, c] \in \text{receive}_{\Supervisor_{\leftarrow \Planner}}
                \leadsto \\ & \exists k_0, k_1 \in \mathbb{R}^3 :: [(k_0, k_1), c] = \Supervisor_{\rightarrow \Planner} \\ & \land 
                k_0 = \textit{c.car.state} \land \exists p \in \mathbf{P} :: \text{$\kappa$-feasible}(p) \land \\
                & \tilde{p}(0) = k_0 \land \tilde{p}(1) = k_1.
            \end{split}
        \end{equation}
    \end{itemize}
\end{itemize}

\end{contract}
\begin{contract}[$\mathcal{C}_{\Planner}$] The contract for the $\Planner$ is as follows:
\begin{itemize}
    \small
    \item Assumes
    \begin{itemize}
        \item When the $\Tracker$ completes its task according to the corridor map $\delta$, it should send a report to the \Planner:
\begin{equation}
\label{planner:assume:tracking_complete}
    \begin{split}
        &\forall c \in \mathbf{C} :: \exists p \in \mathbf{P} :: \forall p' \in \mathbf{P}:: \\ &[p', c] \in \text{send}_{\Planner_{\rightarrow \Tracker}} \preceq [p, c] \in \text{send}_{\Planner_{\rightarrow \Tracker}} \land \\ 
        &\textit{c.car.state} \in \Gamma_{\delta}(p, 1) \\ 
        &\leadsto [\texttt{Completed}, c] \in \text{receive}_{\Planner_{\leftarrow \Tracker}}.
    \end{split}
\end{equation}
        \item If the $\Tracker$ sees a failure, it should report to the $\Planner$:
\begin{equation}
\label{planner:assume:car_broke}
    \begin{split}
        &\forall c \in \mathbf{C} :: \lnot \textit{c.car.healthy} \\ &\leadsto
        [\texttt{Failed}, c] \in \text{receive}_{\Planner_{\leftarrow \Tracker}}.
    \end{split}
\end{equation}
    \end{itemize}
    \item Guarantees
    \begin{itemize}
        \item When receiving a pair of configurations from the $\Supervisor$, the $\Planner$ should send a path to the $\Tracker$ such that the starting and ending configurations of the path match the received configurations or if this is not possible, send \texttt{Blocked} to the \Supervisor:
        \begin{equation}
        \label{planner:guarantee:convert_configuration_to_path}
            \begin{split}
                &\forall c \in \mathbf{C} :: \exists (p_0, p_1) \in \mathbb{R}^6 ::\\ &[(p_0, p_1),c] \in \text{receive}_{\Planner_{\leftarrow \Supervisor}} \\ &\leadsto (\exists p  \in \mathbf{P} :: \tilde{p}(0) = p_0 \land \tilde{p}(1) = p_1 \\ &\land [p, c] \in \text{send}_{\Planner_{\rightarrow\Tracker}} \\&
                \lor [\texttt{Blocked}, c] \in \texttt{send}_{\Planner_{\rightarrow \Supervisor}}).
            \end{split}
        \end{equation}
    
        \item Only send safe paths with $\kappa$-feasible curvature:
\begin{equation}
\label{planner:guarantee:trackable_paths}
    \begin{split}
        &\forall c \in \mathbf{C} :: \exists p \in \mathbf{P} :: [p, c] \in \text{send}_{\Planner_{\rightarrow \Tracker}} \\
        &\Rightarrow\text{$\kappa$-feasible}(p) \land \Gamma_{\delta}(p) \subseteq \mathbf{G}.\textit{drivable\_area}.
    \end{split}
\end{equation}
        
        \item If receiving a task status update from the $\Tracker$, eventually forward it to the $\Supervisor$:
\begin{equation}
\label{planner:guarantee:forward_updates}
    \begin{split}
        &\forall c \in \mathbf{C} :: [m, c] \in \text{receive}_{\Planner_{\leftarrow \Tracker}} \\ 
        &\land m \in \{\texttt{Failed}, \texttt{Completed} \}
        \leadsto \\
        &[m, c] \in \text{send}_{\Planner_{\rightarrow \Supervisor}}.
    \end{split}
\end{equation}
        \item If the $\Planner$ receives a \texttt{Blocked} signal from the $\Tracker$, it attempts to fix it, otherwise forwards it to the $\Supervisor$:
\begin{equation}
\label{planner:guarantee:blockage_fix_or_report}
    \begin{split}
        &\forall c \in \mathbf{C}:: \forall t :: [\texttt{Blocked}, c] \in \text{receive}_{\Planner_{\leftarrow \Tracker}}(t) \Rightarrow \\
        &\exists \varepsilon \in \mathbf{R}_{\geq 0} :: (\exists p \in \mathbf{P} :: [p, c] \in  \text{send}_{\Planner_{\rightarrow \Tracker}}(t+\varepsilon) \\
        &\land \forall t' < t+\varepsilon:: [p, c] \not\in \text{send}_{\Planner_{\rightarrow \Tracker}}(t') \\ 
        &\lor [\texttt{Blocked}, c] \in \text{send}_{\Planner_{\rightarrow \Supervisor}}(t+\varepsilon)).
    \end{split}
\end{equation}
    \end{itemize}
\end{itemize}
\end{contract}
\begin{contract}[$\mathcal{C}_{\Tracker}$] The contract for the tracking component is as follows:
\begin{itemize}
    \small
    \item Assumes
    \begin{itemize}
        \item Any path command from the $\Planner$ is always $\kappa$-feasible, the corresponding corridor is drivable, and the car configuration upon receiving the command is in the initial portion of the corridor:
\begin{equation}
\label{tracker:assume:feasible_path_from_planner}
    \begin{split}
        &\forall c \in \mathbf{C} ::  \forall p \in \mathbf{P} :: \forall t::  \mathrm{starts\_at}([p, c]\\ &\in \text{receive}_{\Tracker_{\leftarrow \Planner}}, t) \land \\ &\textit{$\kappa$-feasible}(p) \land \textit{c.car.state}(t) \in \Gamma_{\delta}(p, 0) \land \Gamma_{\delta}(p) \\ &\subseteq \mathbf{G}.\textit{drivable\_area}.
    \end{split}
\end{equation}
        \item Commands are not modified by the $\Customer$:
    \begin{equation}
    \label{tracker:assume:customer_does_not_modify_inputs}
    \text{See } \eqref{customer:guarantee:no_tampering} \text{ and } \eqref{customer:guarantee:no_tracker_input}.
    \end{equation}
    \end{itemize}
    \item Guarantees
    \begin{itemize}
        \item Make sure car stays in the latest sent $p$'s corridor  $\Gamma_{\delta}(p)$:
        \begin{equation}
        \label{tracker:guarantee:stay_in_corridor}
            \begin{split}
            &\forall c \in \mathbf{C} ::
            \forall t ::\exists p \in \mathbf{P} :: \forall p' \in \mathbf{P} :: \\
            &(([p,c] \in \text{receive}_{\Tracker_{\leftarrow \Planner}}(t) \\& \land [p',c] \in \text{receive}_{\Tracker_{\leftarrow \Planner}}(t)) \\
            &\Rightarrow([p',c] \in \text{receive}_{\Tracker_{\leftarrow \Planner}} \\ &\preceq
           [p,c] \in \text{receive}_{\Tracker_{\leftarrow \Planner}})) \\
           &\Rightarrow\textit{c.car.state}(t) \in \Gamma_{\delta}(p). 
            \end{split}
        \end{equation}
        
        \item Tracking command inputs are compatible with cars:
        \begin{equation}
         \label{tracker:guarantee:compatible_tracking_inputs}
            \begin{split}
                &\forall c \in \mathbf{C}:: \square_{\geq0} ([(v, \varphi), c] \in \text{send}_{\Tracker_{\rightarrow \Customer}} \Rightarrow \\
                &v_{\min} \leq v \land v \leq v_{\max} 
        \land \varphi_{\min} \leq \varphi \land \varphi \leq \varphi_{\max}).
            \end{split}
        \end{equation}
        \item Never drive into a dynamic obstacle (customer or car):
        \begin{equation}
         \label{tracker:guarantee:no_collision}
            \begin{split}
                &\forall c_1, c_2 \in \mathbf{C} :: \square_{\geq0} ((c_1 \neq c_2 \Rightarrow \\ &\norm{(c_1.\textit{car.x}, c_1.\textit{car.y}) - (c_2.\textit{car.x}, c_2.\textit{car.y})} \geq \varepsilon_{\min, car}) \\& \land 
                \norm{(c_1.\textit{car.x}, c_1.\textit{car.y}) - (c_2.\textit{x}, c_2.\textit{y})} \geq \varepsilon_{\min, people})
                ).
            \end{split}
        \end{equation}
        \item If a car fails, it must report to the $\Planner$:
        \begin{equation}
         \label{tracker:guarantee:must_report_failure_to_planner}
    \begin{split}
        &\forall c \in \mathbf{C} :: \lnot \textit{c.car.healthy} \\ &\leadsto
        [\texttt{Failed}, c] \in \text{send}_{\Tracker_{\leftarrow \Planner}}.
    \end{split}
\end{equation}
            \item If  a car is healthy then it must ``track'' the last sent path from the \Planner:
            \begin{equation}
            \label{tracker:guarantee:healthy_must_track_last_path}
                \begin{split}
                    &\forall c \in \mathbf{C} :: \square_{\geq0} \textit{c.car.healthy} \land \exists p \in \mathbf{P} :: \forall p' \in \mathbf{P} :: \\
                    &[p', c] \in \text{receive}_{\Tracker_{\leftarrow \Planner}} \preceq [p, c] \in \text{receive}_{\Tracker_{\leftarrow \Planner}} \\ &\Rightarrow \exists t :: \textit{c.car.state}(t) \in \Gamma_{\delta}(p, 1).
                \end{split}
            \end{equation}

            \item When the $\Tracker$ completes its task according to a corridor map $\delta$, it should send a report to the $\Planner$ module:
        \begin{equation}
         \label{tracker:guarantee:when_task_completed_report_to_planner}
            \text{ See } \eqref{planner:assume:tracking_complete}.
        \end{equation}
        \item If a car is blocked (i.e., there is a failed car in its current corridor), then the \Tracker\: must report \texttt{Blocked} to the \Planner:
        \begin{equation}
        \label{tracker:guarantee:if_blocked_must_report}
            \begin{split}
                &\forall c \in \mathbf{C} :: \forall t :: \exists p \in \mathbf{P} :: ([p, c] \text{receive}_{\Tracker_{\leftarrow \Planner}}(t) :: \\ &\forall p' \in \mathbf{P} :: [p',c] \in \text{receive}_{\Tracker_{\leftarrow \Planner}}(t') \\ &\Rightarrow [p', c] \in \text{receive}_{\Tracker_{\leftarrow \Planner}} \\&\preceq [p, c] \in \text{receive}_{\Tracker_{\leftarrow \Planner}}) \\
                &\Rightarrow(\exists c' \in \mathbf{C} :: c' \neq c \\ &\land \lnot \textit{c'.car.healthy} \land \textit{c'.car.state}(t) \in \Gamma_{\delta}(p) \\ &\leadsto [\texttt{Blocked}, c] \in \text{send}_{\Tracker_{\rightarrow \Planner}}).
            \end{split}
        \end{equation}
        \end{itemize}
\end{itemize}
\end{contract}

\section{System design}
\begin{figure}
      \centering
      \includegraphics[width=\linewidth*3/4]{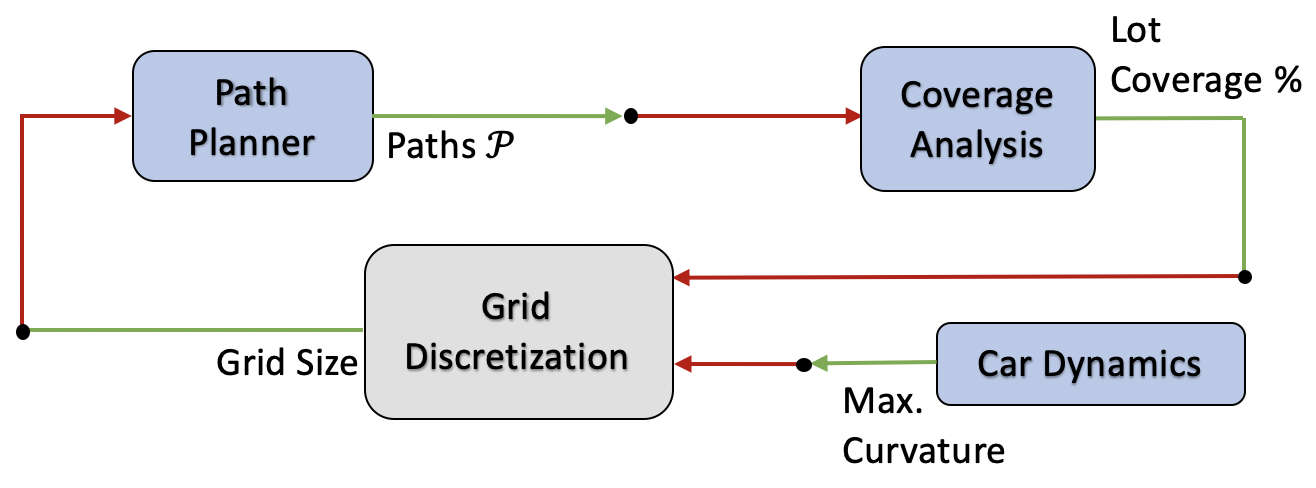}
      \caption{Implementation of the Planner component.}
      \label{fig:plannerimpl}
\end{figure}
\subsection{Simulation Environment and Implementation}
The proposed design framework was demonstrated via simulation of an automated valet parking (AVP) system~\cite{avpsim_video}. It consists of the layout of a parking lot (Fig. \ref{fig:lot}), as well as multiple cars that arrive at the drop off location of the parking lot and are parked in one of the vacant spots by the AVP system. Once the customer requests their car, it is returned to the pick-up location. The asynchronicity is captured by modeling each component as a concurrent process using Python \texttt{async} library \texttt{Trio} \cite{trio}.
The communication between the layers is implemented using \texttt{Trio}'s \texttt{memory\_channel}. In particular, each channel is a first-in-first-out queue which ensures losslessness. The architecture is described in Figure \ref{fig:contracts}. 
In this setup, the cars may experience failures and report them to the \Tracker~module. The failures considered in this demonstration are a blocked path, a blocked parking spot, and a total engine failure resulting in immobilization.
The benefit of the directive-response architecture becomes apparent when failures are introduced into the system. 
Upon experiencing a failure, a component that is higher in the hierarchy will be alerted through the response it receives. If possible, the failure will be resolved, e.g., through the re-planning of the path or assigning a different spot. 
Every layer has access to its contingency plan, consisting of several predetermined actions according to the possible failure scenarios and corresponding responses it receives.
In some cases (e.g., complete blockage of a car), when no action can resolve the issue, the cars have to wait until the obstruction is removed. We assume that only broken cars can be towed, and when a car breaks down, it will take a specified amount of time until it is towed.
\subsection{\Customer~Modeling}
In our simulation, customers are responsible for driving their cars into the parking garage and depositing them at the drop-off area with an admissible configuration before sending a \texttt{Park} directive to the $\Supervisor$ and stay there until they get a response. This is satisfied as long as the customer drops off their vehicle behind the green line such that the heading of the vehicle is within the angle bounds $\underline{\alpha}$ and $\overline{\alpha}$ as shown in Figure~\ref{fig:initial_curvature_bound} with the projection $w$ of the vehicle onto the green edge of the blown-up entrance box shown in Figure~\ref{fig:expath}. Therefore, $\Customer$ satisfies $G_{\eqref{customer:guarantee:stay_in_box}}$. If the \texttt{Park} directive is \texttt{Rejected} by the $\Supervisor$, the customer is assumed to be able to leave the garage safely (satisfying $G_{\eqref{customer:guarantee:rejected_leave_forever}}$). If the car is \texttt{Accepted} then the customer will leave the control of the car to the $\Tracker$ (satisfying $G_{\eqref{customer:guarantee:no_tampering}}$ and $G_{\eqref{customer:guarantee:no_tracker_input}}$). The customer is assumed to always eventually send a \texttt{Retrieve} directive to the $\Supervisor$, after their car is \texttt{Accepted} (satisfying $G_{\eqref{customer:guarantee:will_pick_up}}$ and $G_{\eqref{customer:guarantee:must_receive_accepted_before_retrieve}}$). Once the vehicle is \texttt{Returned}, the customer is assumed to be able to pick it up and drive safely away. All pedestrians in the parking lot are customers, and they are constrained to only walk on the walkable area and never stay on a crosswalk forever (thus satisfying $G_{\eqref{customer:guarantee:on_walkable_area}}$ and $G_{\eqref{customer:guarantee:transient_crosswalks}}$). When a car fails, it becomes immobilized until it is towed ($G_{\eqref{customer:guarantee:unhealthy_cannot_move}}$).
From this, it follows that \Customer\: satisfies $\mathcal{C}_{\Customer}$.
\subsection{\Supervisor~Implementation}
At any time, the $\Supervisor$ knows the total number of cars that have been accepted into the garage, which is represented by the variable $num\_active\_customers$, and is designed to accept new cars when this number is strictly less than the total number of parking spots $\mathbf{G}.parking\_spots$. This implies $G_{\eqref{supervisor:guarantee:acceptance}}$ is satisfied. Overall, this ensures all directives will get a response, yielding $G_{\eqref{supervisor:guarantee:responsive}}$. Whenever the $\Supervisor$ receives a \texttt{Completed} signal, it will check if the car is the return area. If it is, then the $\Supervisor$ will send a $\texttt{Returned}$ signal to the $\Customer$ in compliance with $G_{\eqref{supervisor:guarantee:only_send_returned_when_really_returned}}$. If the $\Supervisor$ ever accepts a new car, or receives a \texttt{Blocked} signal from the $\Planner$, or a \texttt{Retrieve} request it will send a start configuration compatible with the car's current state as well as an end configuration to one of the parking spaces in the former case and to a place in the return area in the latter. This guarantees  $G_{\eqref{supervisor:guarantee:send_pair_of_configurations}}$.
\begin{proposition}
$M_{\Supervisor}$ satisfies $\mathcal{C}_{\Supervisor}.$
\end{proposition}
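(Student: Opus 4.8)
The plan is to unwind the definition of satisfaction from the saturated-contract paragraph: it suffices to show $\mathcal{B}(M_{\Supervisor}) \subseteq (A_{\Supervisor} \Rightarrow G_{\Supervisor})$, where $A_{\Supervisor}$ and $G_{\Supervisor}$ are the conjunctions of the listed assumption and guarantee clauses. I would fix an arbitrary behavior $\sigma \in \mathcal{B}(M_{\Supervisor})$ of the implementation, assume the four assumptions \eqref{supervisor:assume:all_broken_will_be_towed}--\eqref{supervisor:assume:planner_reports_complete} hold along $\sigma$, and then verify each guarantee clause separately, since a conjunction is discharged clause by clause.

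First I would dispatch the guarantees enforced directly by the control logic, without appeal to the assumptions. Guarantee \eqref{supervisor:guarantee:acceptance} follows from the implemented acceptance rule: whenever there is an un-answered \texttt{Park} request and $\textit{num\_active\_customers} < \mathbf{G}.\textit{parking\_spots}$, the $\Supervisor$ emits \texttt{Accepted}. Combined with the bookkeeping constraint on $\textit{num\_active\_customers}$ in Table~\ref{supervisor_table}, this also yields the responsiveness guarantee \eqref{supervisor:guarantee:responsive}, since every request is eventually answered by \texttt{Accepted} or \texttt{Rejected}. Guarantee \eqref{supervisor:guarantee:only_send_returned_when_really_returned} is immediate from the rule that \texttt{Returned} is emitted only after a \texttt{Completed} report has been received and the car has been verified to lie in $\mathbf{G}.\textit{return\_configurations}$. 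Finally, \eqref{supervisor:guarantee:send_pair_of_configurations} follows from the directive-generation rule, which on each triggering event (an \texttt{Accepted} sent, a \texttt{Retrieve} received, or a \texttt{Blocked} received) emits a pair $(k_0,k_1)$ with $k_0$ set to the current car state and $k_1$ chosen in a reachable region, so that existence of a connecting path reduces to checking $\kappa$-feasibility.

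The main obstacle is the liveness guarantee \eqref{supervisor:guarantee:eventual_return} (eventual return of a healthy, summoned car), which cannot be read off the implementation alone and must be derived by chaining the reactive logic with assumption \eqref{supervisor:assume:planner_reports_complete}. I would argue as follows: suppose the car stays healthy ($\square_{\geq 0}\textit{c.car.healthy}$) and a \texttt{Retrieve} is received from the $\Customer$. By the directive-generation rule behind \eqref{supervisor:guarantee:send_pair_of_configurations}, the $\Supervisor$ responds with a configuration pair whose endpoint lies in $\mathbf{G}.\textit{return\_configurations}$, which simultaneously discharges the ``last command is to the return area'' guarantee \eqref{supervisor:guarantee:healthy_retrieval_implies_last_command_is_to_return_area}. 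Assumption \eqref{supervisor:assume:planner_reports_complete} then guarantees that the $\Planner$ eventually reports \texttt{Completed} and that the car appears in $\mathbf{G}.\textit{return\_configurations}$, at which point the \texttt{Returned}-emission rule underlying \eqref{supervisor:guarantee:only_send_returned_when_really_returned} fires; the $\leadsto$ in \eqref{supervisor:guarantee:eventual_return} is obtained by transitivity of $\leadsto$ along this finite causal chain.

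The delicate point I expect to spend the most care on is the temporal bookkeeping in this chain: each step is stated with a $\leadsto$ or a $\preceq$, and I must check that the existential time witnessing one step is compatible with the quantifier structure of the next, so the composite is again a genuine $\leadsto$ rather than a weaker iterated-eventuality, and that the \texttt{Retrieve} registered on the $\Customer$ channel is the one feeding the $\Planner$-completion assumption. I would also handle the edge case hidden in \eqref{supervisor:guarantee:healthy_retrieval_implies_last_command_is_to_return_area}, which requires the return-area directive to be the \emph{last} one sent: I must rule out a later \texttt{Blocked}-triggered re-plan overwriting it, using the healthiness hypothesis together with the fact that a healthy car is never reported \texttt{Failed}. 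The remaining assumptions \eqref{supervisor:assume:all_broken_will_be_towed}, \eqref{supervisor:assume:fail_report}, and \eqref{supervisor:assume:cowlagi} enter only as side conditions that exclude failure behaviors and certify correct deposition, rather than as the main drivers of the argument.
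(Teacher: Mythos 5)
Your proposal is correct and takes essentially the same route as the paper's proof: the guarantees \eqref{supervisor:guarantee:responsive}, \eqref{supervisor:guarantee:only_send_returned_when_really_returned}, \eqref{supervisor:guarantee:healthy_retrieval_implies_last_command_is_to_return_area}, \eqref{supervisor:guarantee:acceptance}, and \eqref{supervisor:guarantee:send_pair_of_configurations} are read off the implementation description for an arbitrary behavior $\sigma$, and the liveness guarantee \eqref{supervisor:guarantee:eventual_return} is discharged by chaining assumption \eqref{supervisor:assume:planner_reports_complete} with the rule that a received \texttt{Completed} triggers a \texttt{Returned} message to the \Customer. Your additional attention to the $\leadsto$/$\preceq$ quantifier bookkeeping and the later-\texttt{Blocked} overwrite edge case is more careful than the paper's terse argument but does not alter the approach.
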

\begin{proof}

Let $M$ denote our implementation of the \Supervisor\: and $\sigma \in M$. We want to show that $$\sigma \in 
\bigwedge\limits_{i=\ref{supervisor:assume:all_broken_will_be_towed}}^{\ref{supervisor:assume:planner_reports_complete}} A_{(i)}
\Rightarrow \sigma \in \bigwedge\limits_{i=\ref{supervisor:guarantee:responsive}}^{\ref{supervisor:guarantee:send_pair_of_configurations}} G_{(i)}.$$ 

From the description of the $\Supervisor$ implementation, we conclude $\sigma \in G_{\eqref{supervisor:guarantee:responsive}} \land
G_{\eqref{supervisor:guarantee:only_send_returned_when_really_returned}} \land G_{\eqref{supervisor:guarantee:healthy_retrieval_implies_last_command_is_to_return_area}} \land
G_{\eqref{supervisor:guarantee:acceptance}} \land 
G_{\eqref{supervisor:guarantee:send_pair_of_configurations}}$. Since $\sigma \in A_{\eqref{supervisor:assume:planner_reports_complete}}$ and because in our implementation whenever the $\Supervisor$ receives a \texttt{Completed} signal it will alert the customer of the corresponding status, our implementation satisfies $G_{\eqref{supervisor:guarantee:eventual_return}}$.
\end{proof}

\subsection{\Planner~Implementation}
The $\Planner$ computes paths that cover the parking spots, as well as the entry and exit areas of the parking garage, which are $\kappa$-feasible for a car that satisfies $\eqref{dubins}$ such that the corresponding $\delta$-corridor is on $\mathbf{G}.drivable\_area$. 
Given a maximum allowable curvature, a grid discretization scheme is based on a planning grid whose size is computed to provide full lot coverage and satisfy the curvature bounds, as depicted in Figure~\ref{fig:plannerimpl}. 
For every specified grid size, the algorithm will check if the planning graph is appropriate by determining how well the parking lot is covered.
Only a grid size that provides full coverage of the lot is chosen for path planning.
The dynamical system specified in~\eqref{dubins} is differentially flat \cite{fliess1995flatness}. In particular, it is possible to compute all states and inputs to the system, given the outputs $x, y$, and their (in this case, up to second order) derivatives. Specifically, the steering input is given by
\begin{equation}
\label{phi-kappa}
    \varphi(t) = \arctan(\ell \kappa(t)), 
\end{equation}
where $\kappa(t)$ is the curvature of the path traced by the midpoint of the rear axle at time $t$ given by
\begin{equation}
    \kappa(t) = \frac{\ddot{y}(t)\dot{x}(t)-\ddot{x}(t)\dot{y}(t)}{\big({\dot{x}^2(t) + \dot{y}^2(t)}\big)^{\frac{3}{2}}}.
\end{equation}
The task of tracking a given path can be shown to
depend only on how $\varphi(t)$ is constrained. For practical purposes, let us assume $\lvert{\varphi(t)\rvert} \leq B$ for some $B > 0$. Then by Equation~\eqref{phi-kappa}, tracking feasibility depends on whether the maximum curvature of that path exceeds $\frac{\tan(B)}{\ell}$. For our implementation this is assumed to be $0.2\;m^{-1}$.
This problem has been studied in~\cite{cowlagi2011hierarchical} in the context of rectangular cell planning. We apply the algorithm described therein for a Type 1 path (CBTA-S1) to a rectangular cell while constraining the exit configuration to a heading difference of $\pm5^\circ$ and a deviation of $\pm0.5\:m$ from the nominal path. The setup and the resulting initial configuration, for which traversal is guaranteed, are shown in Figure~\ref{fig:initial_curvature_bound} and  Figure~\ref{fig:expath}. The initial car configuration can be anywhere on the grid segment entry edge, as long as it is between the lower bound $\underline{\alpha}$ and the upper bound $\overline{\alpha}$. By passing through this initial funnel segment, the car will transition itself onto the planning grid. Therefore, it remains to be verified that each path generated from the grid is guaranteed to have a maximum curvature that is smaller than $\kappa$. An example path and its curvature are provided in Figure~\ref{fig:expath}. Combining the parking lot coverage, initial grid segment traversability, and the curvature analysis, a grid size is determined to be $3.0\:m$ for the path planner, according to Figure~\ref{fig:plannerimpl}. The synthesized grid size and path smoothing technique used in our $\Planner$ guarantee that all trajectories generated meet this maximum curvature requirement. In addition to satisfying $G_{\eqref{planner:guarantee:forward_updates}}$, any execution of the $\Planner$ also satisfies  $G_{\eqref{planner:guarantee:convert_configuration_to_path}}$ and $G_{\eqref{planner:guarantee:trackable_paths}}$ because either the $\Planner$ can generate a feasible path or it will send a \texttt{Blocked} signal to the $\Supervisor$. When the $\Planner$ receives a \texttt{Blocked} signal from the $\Tracker$ it will either attempt to find a different path on the planning graph or report this to the $\Supervisor$. This satisfies $G_{\eqref{planner:guarantee:blockage_fix_or_report}}$.
\begin{figure}
        \centering
        \includegraphics[width=\linewidth*3/4]{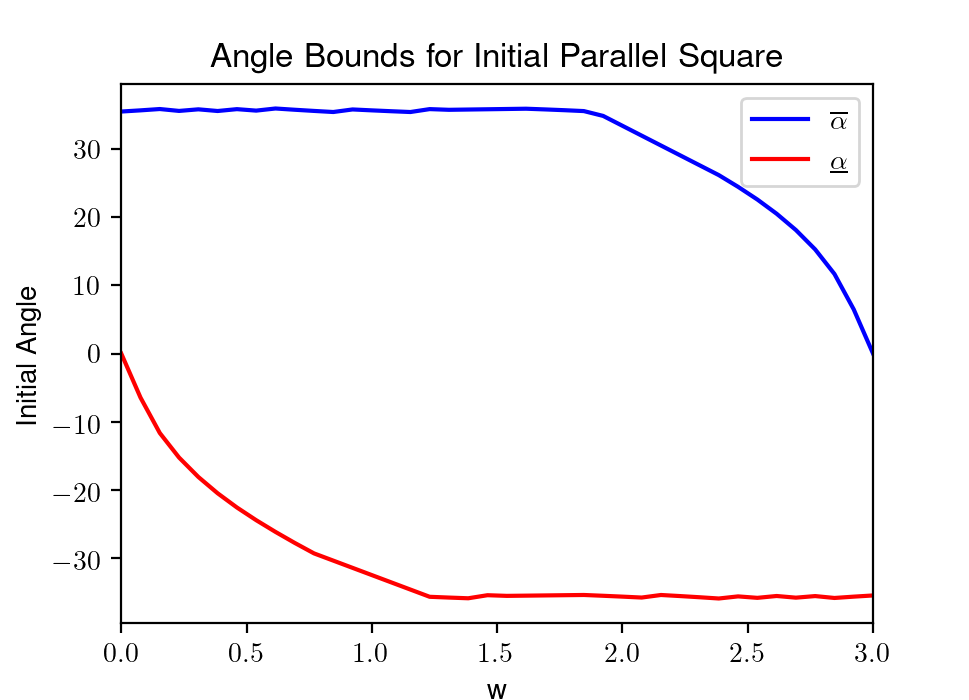}
        \caption{Possible initial car configuration along the entrance region (green) corresponding to a grid square as defined in Fig.~\ref{fig:expath}.}
        \label{fig:initial_curvature_bound}
\end{figure}

\begin{figure}
      \centering
      \includegraphics[width=\linewidth*3/4]{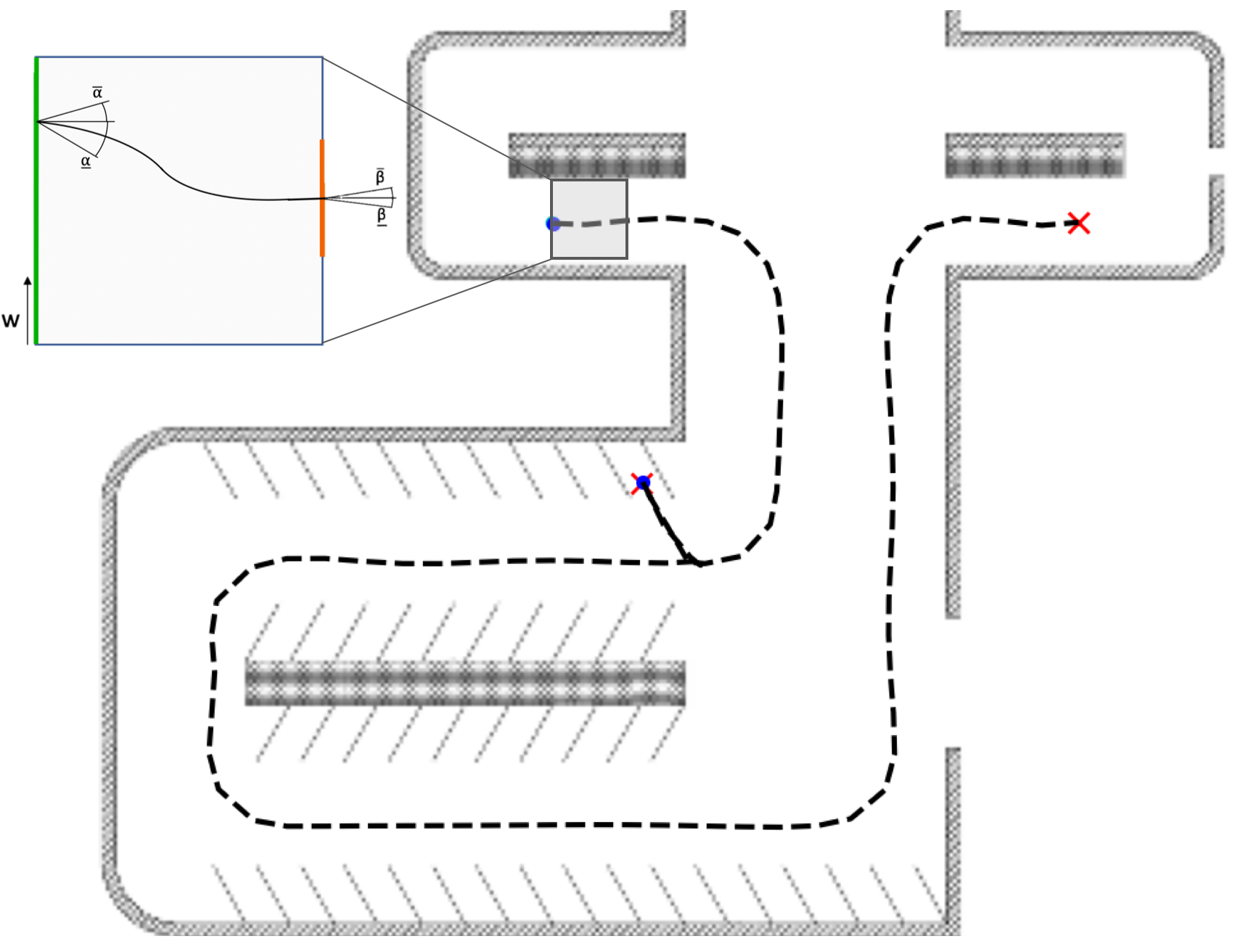}
      \includegraphics[width=\linewidth]{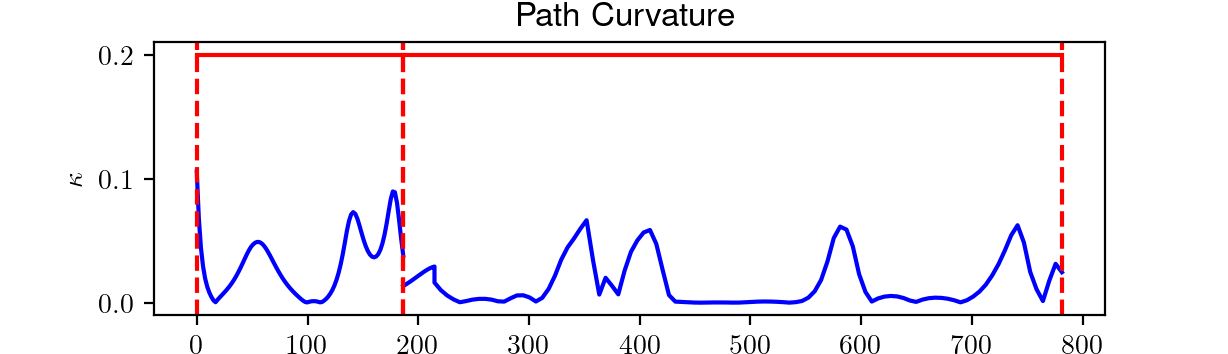}
      \caption{Example path through the parking lot and corresponding curvature and initial grid segment layout.}
      \label{fig:expath}
\end{figure}

\subsection{\Tracker~Implementation} 
The $\Tracker$ receives directives from the $\Planner$ consisting of trackable paths and sends responses according to the task status to the $\Planner$.
The $\Tracker$ sees all agents in $\mathbf{G}.interior$ and guarantees no collisions by sending a brake signal when necessary to ensure a minimum safe distance is maintained at all times. 
The tracking algorithm that we use is an off-the-shelf MPC algorithm from \cite{sakai2018pythonrobotics}.

To ensure that the vehicles stay in the $\delta$-corridors, given knowledge of the vehicle's dynamics, we can
synthesize motion primitives that are robust to a certain disturbance set $\Delta_{Car}$ (see Figure~\ref{fig:contracts}). Algorithms for achieving this have been proposed and implemented, for example, in \cite{schurmann2017guaranteeing} for nonlinear, continuous-time systems and for affine, discrete-time systems in \cite{filippidis2016control}. In our implementation we used a simplified approach, which ensures that a backup controller for the car gets activated if the car approaches the boundary of the $\delta$-corridor and ensures that the car will merge onto the path again. Once it reaches the original path, the tracking of the remaining path will continue.

By $A_{\eqref{tracker:assume:feasible_path_from_planner}}$, any new path command $[p, c]$ sent down from the \Planner\: module is assumed to be $\kappa$-feasible and have a drivable $\delta$-corridor, the initial portion of which contains $c.car$ at that time. In our implementation, we ensure that every time this happens, $c.car$ is stationary. And under this condition, we were able to confirm by testing that a car controlled by the MPC algorithm can track the corresponding $\delta$-corridors of a diverse enough set of paths, thus satisfying $G_{\eqref{tracker:guarantee:stay_in_corridor}}$ and $G_{\eqref{tracker:guarantee:healthy_must_track_last_path}}$. The MPC algorithm is configured to output properly bounded control inputs, thus satisfies $G_{\eqref{tracker:guarantee:compatible_tracking_inputs}}$. In addition, our implementation satisfies $G_{\eqref{tracker:guarantee:must_report_failure_to_planner}}$, $G_{\eqref{tracker:guarantee:when_task_completed_report_to_planner}}$ and $G_{\eqref{tracker:guarantee:if_blocked_must_report}}$ by construction.
And finally, we can guarantee $G_{\eqref{tracker:guarantee:no_collision}}$ by Property~\ref{property:safety}.

\section{Correctness of the Composed System}
\label{proofs}
In this section, we will show that our implementation of the AVP is correct and satisfies the overall system specification by leveraging the modularity provided by the contract based design. 
We start by composing the AVP components, namely the \Supervisor, the \Planner, and the \Tracker\: and then computing the quotient of the overall specification and the composed contract. Then we will show that our contract for the \Customer\: is a refinement of this quotient.
\subsection{Contract Composition}
As part of the final verification step, we will be taking the composition of the component contracts and showing that our overall system implementation satisfies this composition. This will imply that the composition is consistent. 

Given two saturated contracts $\mathcal{C}_1$ and $\mathcal{C}_2$, their composition $\mathcal{C}_1 \otimes \mathcal{C}_2 =(A,G)$ given by~\cite{benveniste2018contracts}:
\begin{equation*}
\label{eq:comp_ag}
G=G_{1}\land G_{2} \text{ and }A = \bigvee_{A' \in \mathcal{A}} A',
\end{equation*}
where
\begin{equation*}
\mathcal{A}={\left\{\begin{array}{@{}c|c@{}}
 & A' \land  G_{2} \Rightarrow A_{1}\\A' & \texttt{and}\\ & A' \land  G_{1}\Rightarrow A_{2}.
\end{array}\right\}}
\end{equation*}
A nice property of the composed contract is that if $M_1$ satisfies $\mathcal{C}_1$ and $M_2$ satisfies $\mathcal{C}_2$ then $M_1 \times M_2$ satisfies $\mathcal{C}_1 \otimes \mathcal{C}_2$. Using the fact that the composition operator $\otimes$ is associative and commutative, a straightforward calculation yields the following more explicit form for the composition of $N$ saturated contracts $(A_i, G_i)_{i=1}^N$.
$$G = \bigwedge_{i=1}^{N} G_i \text{ and }
A = \bigwedge_{i=1}^{N} A_i  \lor \lnot \Big( \bigwedge_{i=1}^{N} G_i \Big)$$
If $A \neq \varnothing$, then the composed contract is compatible. The contract is consistent if there exists an implementation for it, namely $G \neq \varnothing$ if it is saturated. For our AVP system, we will show that our composed implementation also satisfies the composed contract in a non-vacuous way, meaning it satisfies all guarantees of the component contracts simultaneously.
In the composition, an acceptable behavior satisfies the following properties, namely the operation of the car inside the garage:
\begin{enumerate}
    \item The \Supervisor\: rejects the car due to the lack of reachable, vacant spots. The car will not enter the garage.
    \item A car which was dropped off correctly in the deposit area is accepted by the \Supervisor\: by $G_{\eqref{supervisor:guarantee:eventual_return}}$.
    \begin{enumerate}
        \item Accepted, no contingency: The $\Tracker$ by $A_{\eqref{tracker:assume:customer_does_not_modify_inputs}}$ takes over control. After this, the $\Supervisor$ must send a directive in the form of a pair of configurations to the $\Planner$ $G_{\eqref{supervisor:guarantee:send_pair_of_configurations}}$, which in turn must send to the $\Tracker$ a safe and feasible path (satisfying $A_{\eqref{tracker:assume:feasible_path_from_planner}}$) such that the starting and ending configurations of the path match the received configurations ($G_{\eqref{planner:guarantee:convert_configuration_to_path}}$, $G_{\eqref{planner:guarantee:trackable_paths}}$). Upon receiving the path from the $\Planner$, the $\Tracker$ ensures that the car stays in the corridor of the path $G_{\eqref{tracker:guarantee:stay_in_corridor}}$ and ensures that it will make progress on that path (this satisfies $G_{\eqref{tracker:guarantee:healthy_must_track_last_path}}$). It will accomplish this while sending compatible inputs to the customer's car $G_{\eqref{tracker:guarantee:compatible_tracking_inputs}}$ and not driving it into people and other cars $G_{\eqref{tracker:guarantee:no_collision}}$. When the $\Customer$ sends a \texttt{Retrieve} command, the above process repeats with the $\Supervisor$, which ensures that the last configuration is in the return area, thus satisfying $G_{\eqref{supervisor:guarantee:healthy_retrieval_implies_last_command_is_to_return_area}}$.
        If this is the last sent path, then upon reaching the end of the path, it should notify the $\Planner$ module that it has completed the task by $G_{\eqref{tracker:guarantee:when_task_completed_report_to_planner}}$ which satisfies $A_{\eqref{supervisor:assume:planner_reports_complete}}$, $A_{\eqref{supervisor:guarantee:acceptance}}$, and $A_{\eqref{planner:assume:tracking_complete}}$. The $\Supervisor$ alerts the $\Customer$ of the completed return by $G_{\eqref{supervisor:guarantee:only_send_returned_when_really_returned}}$.
    \end{enumerate}
    \item Accepted, with problems: If the car is accepted and at any time during the above process: 
        \begin{enumerate}
            \item The car fails (hence, cannot move by $G_{\eqref{customer:guarantee:unhealthy_cannot_move}}$), the $\Tracker$ will send a \texttt{Failed} message to the $\Planner$ by $G_{\eqref{tracker:guarantee:must_report_failure_to_planner}}$ satisfying $A_{\eqref{planner:assume:car_broke}}$ and by $G_{\eqref{planner:guarantee:forward_updates}}$ this will be forwarded to the $\Supervisor$. This satisfies $A_{\eqref{supervisor:assume:fail_report}}$, which together with $A_{\eqref{supervisor:assume:all_broken_will_be_towed}}$, will imply that the failed car will eventually be towed.
            \item The car is \texttt{Blocked}, the $\Tracker$ will report to the $\Planner$ by $G_{\eqref{tracker:guarantee:if_blocked_must_report}}$, which will try to resolve or alert the $\Supervisor$ satisfying $G_{\eqref{planner:guarantee:blockage_fix_or_report}}$.
        \end{enumerate}
\end{enumerate}
\subsection{Contract Quotient}
For saturated contracts $\mathcal{C}$ and $\mathcal{C}_1$, the quotient is defined in \cite{romeo2018quotient} as follows:
\begin{equation}
    \mathcal{C}/\mathcal{C}_1 :=(A\land G_1,A_1 \land G\lor \lnot(A\land G_1))
\end{equation}
Quotienting out the composed specification of the components from the overall system specification should yield the the required customer behavior. The composed system was computed to be $$\mathcal{C}_{AVP}=((A_S\land A_P\land A_T)\lor \neg(G_S\land G_P\land G_T), G_S \land G_P \land G_T)$$ with the assertions $G_i$ and $A_i$ of the Supervisor, Planner and Tracker contract $\mathcal{C}_i$
in saturated form.
 With the contract for the overall system defined as:
\begin{contract}[$\mathcal{C}_{System}$] The contract for the overall system is as follows:
\begin{itemize}
    \small
    \item Assumes
    \begin{itemize}
        \item Any circumstances.
    \end{itemize}
    \item Guarantees
    \begin{itemize}
        \item Never any collisions (safety).
        \item Always healthy cars will eventually be returned (liveness).
    \end{itemize}
\end{itemize}
 \end{contract}
 When computing the quotient of the overall system specification and the composed system, the resulting assumptions and guarantees are the following.
Assuming that the AVP components work correctly (e.g. provide their respective guarantees), the customer must guarantee that all assumptions that the AVP components make on the customer are valid, while ensuring safety and progress. Meaning the customer need to provide the following guarantees:
\begin{itemize}
    \item Guarantees:
    \begin{itemize}
        \item The customer will drop off the car correctly satisfying $A_{\eqref{supervisor:assume:cowlagi}}$.
        \item The customer will not interfere with the car controls after the drop-off satisfying $A_{\eqref{tracker:assume:customer_does_not_modify_inputs}}$.
        \item The customer needs to ensure progress by not blocking the path forever, and eventually requesting and picking up the car.
        \item The customer will not take any action towards collision ensuring safety.
    \end{itemize}
\end{itemize}
Our customer contract refines the contract with the above mentioned guarantees. $A_{\eqref{supervisor:assume:cowlagi}}$ and $A_{\eqref{tracker:assume:customer_does_not_modify_inputs}}$ are satisfied by $G_{\eqref{customer:guarantee:stay_in_box}}$ and $G_{\eqref{customer:guarantee:no_tampering}}$. The safety property is guaranteed by the customer staying in the walkable area by $G_{\eqref{customer:guarantee:on_walkable_area}}$. Progress is ensured by $G_{\eqref{customer:guarantee:will_pick_up}}$, $G_{\eqref{customer:guarantee:transient_crosswalks}}$, and $G_{\eqref{customer:guarantee:rejected_leave_forever}}$. Our \Customer\: contract includes the guarantees generated from the quotient and thus is a refinement of this contract.

We will now show specifically that the composed system satisfies the safety and progress properties ($G_{\eqref{tracker:guarantee:no_collision}}$ and $G_{\eqref{supervisor:guarantee:eventual_return}}$):
\begin{property}[Safety]
\begin{equation}
\label{property:safety}
            \begin{split}
                &\forall c_1, c_2 \in \mathbf{C} :: \square ((c_1 \neq c_2 \Rightarrow \\ &\norm{(c_1.\textit{car.x}, c_1.\textit{car.y}) - (c_2.\textit{car.x}, c_2.\textit{car.y})} \geq \varepsilon_{\min, car}) \land \\ 
                &\norm{(c_1.\textit{car.\textit{x}}, c_1.\textit{car.y}) - (c_2\textit{.x}, c_2.\textit{y})} \geq \varepsilon_{\min, people})
                ).
            \end{split}
        \end{equation}\end{property}
\begin{proof}{(Sketch)}
For each vehicle in the parking lot, the following invariance is maintained.
There will be no collisions, as the $\Tracker$ checks the spatial region in front of the car and brings it to a full stop in case the path is blocked by another agent (car or pedestrian). The minimum distance to an obstacle is determined by a minimum braking distance.
Furthermore, the environment does not take actions, which will lead to an inevitable collision due to the constraints on the pedestrian dynamics~\ref{control_limits}.
\end{proof}
\begin{property}[Liveness]
\label{property:liveness}
\begin{equation}
        \begin{split}
   &\forall c \in \mathbf{C} :: (\square \textit{c.car.healthy} \land \lozenge\square \neg [\texttt{Blocked},c]\\ &\in \text{receive}_{\Supervisor
   -{\leftarrow\Planner}}
   \Rightarrow \\
                &[\texttt{Retrieve}, c] \in \text{receive}_{\Supervisor_{\leftarrow \Customer}} \leadsto \\
                &[\texttt{Returned}, c] \in \text{receive}_{\Customer_{\leftarrow \Supervisor}}
                ).        \end{split}
    \end{equation}
\end{property}
\vspace{2mm}
\begin{proof}{(Sketch)}
Consider the parking lot topology shown in Figure~\ref{fig:lot}.
Let $c \in \mathbf{C}$ and $\textit{c.car.healthy}$. Assume that $c$ sends a \texttt{Retrieve} message to the \Supervisor\:. For each $t$, let us define $f(t)$ to be the number cars between $\textit{c.car}$ and its destination. Clearly, $f(t) \geq 0$ for any $t$ and  $f(t)$ is well-defined because for the topology being considered, we can trace out a line that starts from the entrance area, going to any one of the parking spots and ending at the return area without having to retrace our steps at any time. We will show that there exists a $t' \geq t$ such that $f(t') = 0$, implying that there is no longer any obstacle between $c$ and its destination. Next, we claim that $\forall t,t'::t' > t:: f(t) \geq f(t')$. This is true because:
\begin{itemize}
    \item The parking lot topology and the safety measures do not allow for overtaking.
    \item The area reservation strategy implemented in the \Supervisor\: prevents an increase in $f$ upon re-routing to avoid a failed car. A notable detail is that if $\textit{c.car}$ is trying to back out of a parking spot, a stream of cars passing by can potentially block it forever. This is resolved by having $\textit{c.car}$ reserve the required area so that once any other car has cleared this area, $\textit{c.car}$ is the only one that has the right to enter it.
\end{itemize}
Finally, we will show that $\forall t:: \exists t':: t' > t:: f(t) > f(t')$.
 Let $c'$ be such that $c'.car$ is between $c.car$ and its destination. By the dynamical constraint on pedestrians and by assumptions $A_{\eqref{customer:guarantee:on_walkable_area}}$ and $A_{\eqref{customer:guarantee:transient_crosswalks}}$, they will not block cars forever. Our algorithm guarantees that one of the following will happen at some time $t' > t$:
    \begin{enumerate}
        \item $c'.car$ is picked up by $c'$.
        \item $c'.car$ is parked and $c.car$ drives past it
        \item $c'.car$ drives past $c.car$'s destination.
        \item $c'.car$ breaks down and by $A_{\eqref{supervisor:assume:all_broken_will_be_towed}}$ is eventually towed.
    \end{enumerate}
It is easy to see that each of these events implies that $f(t) > f(t')$. Since $f$ is an integer and cannot drop below $0$, the result follows.
\end{proof}

\addtolength{\textheight}{-3cm}   

\section{SUMMARY AND FUTURE WORK}
We have formalized an assume-guarantee contract variant with communication via a directive-response framework. We then used it to write specifications and verified the correctness of an AVP system implementation~\cite{avpsim_video}. This was done separately for each module and everything together as a complete system. 

The application of this framework in the AVP can be extended to more agent types, for example, human-driven cars and pedestrians that do not necessarily follow traffic rules at all times. A contract between the valet driven cars and the human-driven cars will be needed to ensure the safe operation of the parking lot, and in the event that a human-driven car violates the contract, cars controlled by the system need to be able to react to this situation safely.
More failure scenarios such as communication errors (message loss, cyberphysical attacks etc.) may also be included.

\section{ACKNOWLEDGMENTS}

This research was supported by DENSO International America, Inc and National Science Foundation award CNS-1932091.


\bibliographystyle{ieeetr}
\bibliography{refs}

\end{document}